\newcommand{\g}{\ifnum\currentgrouptype=16 \;\middle|\;\else\mid\fi}
\newcommand{\df}[1]{\textit{#1}}
\newcommand{\norm}[1]{\| #1 \|}
\newcommand{\abs}[1]{ \left | #1 \right | }
\def\Rbmon{\textrm{R}^{\textrm{mon}}}
\newcommand*\diff{\mathop{}\!\mathrm{d}}
\def\Re{\mathbf{R}} 
\def\Na{\mathbf{N}} 
\def\Qe{\mathbf{Q}}
\def\E{\mathop \mathbf{E}}
\def\ep{\varepsilon}
\def\ta{\theta}
\def\al{\alpha}
\def\la{\lambda}
\def\da{\delta}
\def\g{\gamma} 
\def\phi{\varphi}
\def\sa{\sigma}
\def\one{\mathbf{1}}
\def\os{\emptyset}
\def\Li{\textrm{Li}}
\def\Ls{\textrm{Ls}}
\def\Rbmon{\textrm{R}^{\textrm{mon}}}
\def\Pr{\mathop{}\mathrm{Pr}}
\def\ep{\varepsilon}
\def\os{\varnothing}
\def\one{\mathbf{1}}
\def\U{\mathcal{U}}
\def\P{\mathcal{P}}
\def\Vol{\mathrm{Vol}}
\newdimen\slantmathcorr
\def\oversl#1{
\setbox0=\hbox{$#1$}
\slantmathcorr=\wd0
\hskip 0.2\slantmathcorr \overline{\hbox to 0.8\wd0{%
\vphantom{\hbox{$#1$}}}}
\hskip-\wd0\hbox{$#1$}
}
\def\undersl#1{
\setbox0=\hbox{$#1$}
\slantmathcorr=\wd0
\underline{\hbox to 0.8\wd0{%
\vphantom{\hbox{$#1$}}}}
\hskip-0.8\wd0\hbox{$#1$}
}
\theoremstyle{plain}
\newtheorem{theorem}{Theorem}
\newtheorem{proposition}{Proposition}
\newtheorem{lemma}{Lemma}
\newtheorem{corollary}{Corollary}
\theoremstyle{definition}
\newtheorem{definition}{Definition}
\newtheorem{example}{Example}
\theoremstyle{remark}
\newtheorem*{claim*}{Claim}
\begin{document}
\title[Recovering utility]{
Recovering utility
}

\author[Chambers]{Christopher P. Chambers}
\address[Chambers]{Department of Economics, Georgetown University}
\author[Echenique]{Federico Echenique}
\address[Echenique]{Department of Economics, UC Berkeley}
\author[Lambert]{Nicolas S. Lambert}
\address[Lambert]{Department of Economics, University of Southern California}


\thanks{Echenique thanks the National Science Foundation for its support through grant SES 1558757.
Lambert gratefully acknowledges the financial support and hospitality of Microsoft Research New York and the Yale University Cowles Foundation.}
\begin{abstract}
We provide sufficient conditions under which a utility function may be recovered from a finite choice experiment. Identification, as is commonly understood in decision theory, is not enough. We provide a general recoverability result that is widely applicable to modern theories of choice under uncertainty.  Key is to allow for a monetary environment, in which an objective notion of monotonicity is meaningful. In such environments, we show that subjective expected utility, as well as variational preferences, and other parametrizations of utilities over uncertain acts are recoverable. We also consider utility recovery in a statistical model with noise and random deviations from utility maximization.
\end{abstract}
\maketitle

\section{Introduction}

Economists are often interested in recovering preferences and utility functions from data on agents' choices. If we are able to recover a utility function, then a preference relation is obviously implied, but the inverse procedure is more delicate. In this paper, we presume access to data on an agent's choices, and that these describe the agent's preferences (or that preferences have been obtained as the outcome of a statistical estimation procedure). Our results describe sufficient conditions under which one can recover, or learn, a utility function from the agents' choices.

At a high level, the problem is that preferences essentially \textit{are} choices, because they encode the choice that would be made from each binary choice problem. When we write $x\succ y$ we really mean that $x$ would be chosen from the set $\{x,y\}$. Utility functions are much richer objects, and a given choice behavior may be described by many different utilities. For example, one utility can be used to discuss an agent's risk preferences: they could have a ``constant relative risk aversion'' utility, for which a single parameter describes attitudes towards risk. But the same preferences can be represented by a utility that does not have such a convenient parametrization. So recovering, or learning, utilities present important challenges that go beyond the problem of recovering a preference. In the paper, we describe some simple examples that illustrate the challenges. Our main results describe when one may (non-parametrically) recover a utility representation from choice data.

We first consider choice under uncertainty. We adopt the standard (Anscombe-Aumann) setting of choice under uncertainty, and focus attention on a class of utility representations that has been extensively studied in the literature. Special cases include subjected expected utility, the max-min expected utility model of \cite{gilboa1989}, Choquet expected utility \citep{schmeidler1989},  the variational preferences of \cite{maccheroni2006ambiguity}, and many other popular models. Decision theorists usually place significance on the uniqueness of their utility representations, arguing that uniqueness provides an identification argument that allows for utility to be recovered from choice data. We argue, in contrast, that uniqueness of a utility representation is \textit{not enough} to recover a utility from finite choice data.

Counterexamples are not hard to find. Indeed, even when a utility representation is unique, one may find a convergent sequence of utilities that is consistent with larger and larger finite datasets, but that does not converge to the utility function that generated the choices in the data, or to any utility to which it is equivalent. So uniqueness is necessary but not sufficient for a utility representation to be empirically tractable, in the sense of ensuring that a utility is recovered from large, but finite, choice experiments. 

Our main results are positive, and exhibit sufficient conditions for utility recovery. Key to our results is the availability of an objective direction of improvements in utility: we focus our attention on models of monotone preferences. Our paper considers choices among monetary acts, meaning state-contingent monetary payoffs. For such acts, there is a natural notion of monotonicity. Between two acts, if one pays more in every state of the world, the agent agent should prefer it. As a discipline on the recovery exercise, this essential notion of monotonicity suffices to ensure that a sequence of utilities that explains the choices in the data converges to the utility function that generated the choices.

We proceed by first discussing the continuity of a utility function in its dependence on the underlying preference relation. If $U(\succeq,x)$ is a function of a preference $\succeq$ and of choice objects $x$, then we say that it is a utility function if $x\mapsto U(\succeq,x)$ represents $\succeq$. We draw on the existing literature (Theorem~\ref{thm:levin}) to argue that such continuous utilities exist in very general circumstances. Continuity of this mapping in the preference ensures that if the choice data allow for preference recovery, they also allow a utility to be recovered. The drawback, however, of such general utility representation results is that they do not cover the special theories of utility in which economists generally take interest.  There is no reason to expect that the utility $U(\succeq,x)$ coincides with the standard parametrizations of, for example, subjective expected utility or variational preferences. 

We then go on to our main exercise, which constrains the environment to the Anscombe-Aumann setting, and considers utility representations that have received special attention in the theory of choice under uncertainty. We consider a setup that is flexible enough to accommodate most theories of choice under uncertainty that have been studied in the literature. Our main result (Theorem~\ref{thm:convergenceutilities}) says that, whenever a choice experiment succeeds in recovering agents' underlying preferences, it also serves to recover a utility in the class of utilities of interest. For example, if an agent has subjective expected utility preferences, and these can be recovered from a choice experiment, then so can the parameters of the subjective expected utility representation: the agents' beliefs and Bernoulli utility index. Or, if the agent has variational preferences that can be inferred from choice data, then so can the different components of the variational utility representation.

Actual data on choices may be subject to sampling noise, and agents who randomly deviate from their preferences. The results we have just mentioned are useful in such settings, once the randomness in preference estimates is taken into account. As a complement to our main findings, we proceed with a model that explicitly takes noisy choice, and randomness, into account. Specifically, we consider choice problems that are sampled at random, and an agent who may deviate from their preferences. They make mistakes. In such a setting, we present sufficient conditions for the consistency of utility function estimates (Theorem~\ref{thm:noisychoice}).  

In the last part of the paper we take a step back and revisit the problem of preference recovery, with the goal of showing how data from a finite choice experiment can approximate a preference relation, and, in consequence, a utility function. Our model considers a large, but finite, number of binary choices. We show that when preferences are monotone, then preference recovery is possible (Theorem~\ref{thm:weaklymon-v1}). In such environments, utility recovery follows for the models of choice under uncertainty that we have been interested in (Corollary~\ref{cor:connection}). 

\paragraph{Related literature.}

The literature on revealed preference theory in economics is primarily devoted to tests for consistency with rational choice. The main result in the literature, Afriat's theorem \citep{afriat,diewert1973afriat,varian1982nonparametric}, is in the context of standard demand theory (assuming linear budgets and a finite dataset). Versions of Afriat's result have been obtained in a model with infinite data \citep{reny2015characterization}, nonlinear budget sets (e.g., \citealp{matzkin1991axioms,forges2009}), general choice problems (e.g., \citealp{chavas1993generalized,nishimura}), and multiperson equilibrium models (e.g., \citealp{brown1996testable,carvajal2013revealed}).  Algorithmic questions related to revealed preference are discussed by \cite{echenique2011revealed} and \cite{camarachoice}. The monograph by \cite{chambers2016revealed} presents an overview of results. 

The revealed preference literature is primarily concerned with describing the datasets that are consistent with the theory, not with recovering or learning a preference, or a utility. In the context of demand theory and choice from linear budgets, \citet{mascolell78} introduces sufficient conditions under which a preference relation is recovered, in the limit, from a sequence of ever richer demand data observations. More recently, \citet{forges2009} derive the analog of \citeauthor{mascolell78}'s results for nonlinear budget sets. An important strand of literature focuses on non-parametric econometric estimation methods applied to demand theory data: \citet{blundell2003nonparametric,blundell2008best} propose statistical tests for revealed preference data, and consider counterfactual bounds on demand changes.

The problem of preference and utility recovery has been studied from the perspective of statistical learning theory. \cite{beigman2006learning} considers the problem of learning a demand function within the PAC paradigm, which is closely related to the exercise we perform in Section~\ref{sec:noise}. A key difference is that we work with data on pairwise choices, which are common in experimental settings (including in many recent large-scale online experiments).  \cite{zadimoghaddam2012efficiently} look at the utility recovery problem, as in \cite{beigman2006learning}, but instead of learning a demand function they want to understand when a utility can be learned efficiently. \cite{balcan2014learning} follow up on this important work by providing sample complexity guarantees, while \cite{ugarte2022preference} considers the problem of recovery of preferences under noisy choice data, as in our paper, but within the demand theory framework. Similarly, the early work of \cite{balcan2012learning} considers a PAC learning question, focusing on important sub-classes of valuations in economics. \cite{Bei2016LearningMP} pursues the problem assuming that a seller proposes budgets with the objective of learning an agent's utility (they focus on quasilinear utility, and a seller that obtains aggregate demand data).  \cite{Zhang2020LearningTV} considers this problem under an active-learning paradigm, and contrasts with the PAC sample complexity.

In all, these works are important precedents for our paper, but they are all within the demand theory setting. The results do not port to other environments, such as, for example, binary choice under risk or uncertainty. The closest paper to ours is \cite{chambers2021recovering}, which looks at a host of related questions to our paper but focusing on {\em preference}, not {\em utility}, recovery. The work by \citeauthor{chambers2021recovering} considers choices from binary choice problem, but does not address the question of recovering, or learning, a utility function. As we explain below in the paper, the problem for utilities is more delicate than the problem for preferences.  In this line of work, \cite{prasad2019} obtains important results on learning a utility but restricted to settings of intertemporal choice. The work by \cite{basu2020falsifiability} looks at learnability of utility functions (within the PAC learning paradigm), but focusing on particular models of choice under uncertainty. Some of our results rely on measures of the richness of a theory, or of a family of preferences, which is discussed by \cite{basu2020falsifiability} and \cite{fudenberg2021flexible}: the former by estimating the VC dimension of theories of choice under uncertainty, and the latter by proposing and analyzing new measures of richness that are well-suited for economics, as well as implementing them one economic datasets.

Finally, it is worth mentioning that preference and utilty recovery is potentially subject to to strategic manipulations, as emphasized by \cite{dong2018strategic} and \cite{echenique2020incentive}. This possibility is ignored in our work.

\section{The Question}\label{sec:thequestion}
We want to understand when utilities can be recovered from data on an agent's choices. Consider an agent with a utility function $u$. We want know when, given enough data on the agent's choices, we can ``estimate'' or ``recover'' a utility function that is guaranteed to be close to $u$. 

In statistical terminology, recovery is analogous to the consistency of an estimator, and approximation guarantees are analogous to learnability.  Imagine a dataset of size $k$, obtained from an incentivized experiment with $k$ different choice problems.\footnote{Such datasets are common in experimental economics, including cases with very large $k$. See, for example, \citet{vonGaudecker2011}, \citet{chapman2017willingness}, \citet{chapman2018econographics} and \citet{falkQJE2018}. One can also apply our results to roll call data from congress, as in  \cite{poolerosenthal1985} or \cite{clinton_jackman_rivers_2004}. Large-scale A/B testing by tech firms may provide further examples (albeit involving proprietary datasets).} The observed choice behavior in the data may be described  by a preference $\succeq^k$, which is associated with a utility function $u^k$. The preference $\succeq^k$ could be a rationalizing preference, or a preference estimate. So we choose a utility representation for $u^k$. The recovery, or consistency, property is that $u^k\to u$ as $k\to \infty$.

Suppose that the utility $u$ represents preferences $\succeq$, which summarize the agent's full choice behavior. Clearly, unless $\succeq^k\to\succeq$, the exercise is hopeless. So our first order of business is to understand when $\succeq^k\to\succeq$ is enough to ensure that $u^k\to u$. In other words, we want to understand when recovering preferences is sufficient for recovering utilities. To this end, our main results are in Section~\ref{sec:AAresults}. In recovering a utility, we are interested in particular parametric representations. In choice over uncertainty, for example, one may be interested in measures of risk-attitudes, or uncertainty aversion. It is key then that the utility recovery exercises preserves the aspects of utility that allow such measures to be have meaning. If, say, preferences have the ``constant relative risk aversion'' (CRRA) form, then we want to recover the Arrow-Pratt measure of risk aversion. 

Our data is presumably obtained in an experimental setting, where an agent's behavior may be recorded with errors; o in which the agent may randomly deviate from their underlying preference $\succeq$. Despite such errors,  with high probability, ``on the sample path,''  we should obtain that  $\succeq^k\to\succeq$. In our paper we uncover situations where this convergence leads to utility recovery. Indeed, the results in Section~\ref{sec:AAresults} and~\ref{sec:certaintyequiv} may be applied to say that, in many popular models in decision theory, when $\succeq^k\to\succeq$ (with high probability), then the resulting utility representations enable utility recovery (with high probability).

The next step is to discuss learning and sample complexity. Here we need to explicitly account for randomness and errors. We lay out a model of random choice, with random sampling of choice problems and errors in agents' choices. The errors may take a very general form, as long as random choices are more likely to go in the direction of preferences than against it (if $x\succ y$ then $x$ is the more likely choice from the choice problem $\{x,y\}$), and that this likelihood ratio remains bounded away from one. Contrast with the standard theory of discrete choice, where the randomness usually is taken to be additive, and independent of the particular pair of alternatives that are being compared. 

Here we consider a formal statistical consistency problem, and exhibit situations where utility recovery is feasible. We use ideas from the literature on PAC learning to provide formal finite sample-size bounds for each desired approximation guarantee. See Section~\ref{sec:noise}.

\section{The Model}\label{sec:model}

\subsection{Basic definitions and notational conventions}
Let $X$ be a set.  Given a binary relation $R\subseteq X\times X$, we write $x \mathrel{R} y$ when $(x,y)\in R$.  A binary relation that is complete and transitive is called a \df{weak order}.  If $X$ is a topological space, then we say that $R$ is \df{continuous} if $R$ is closed as a subset of $X\times X$ (see, for example, \citealp{bergstrom1976}). A \df{preference relation} is a weak order that is also continuous. 

A preference relation $\succeq$ is \df{locally strict} if, for all $x,y\in X$, $x \succeq y$ implies that for each neighborhood $U$ of $(x,y)$, there is $(x',y')\in U$ with $x \succ y$.  The notion of local strictness was first introduced by \citet{border1994dynamic} as a generalization of the property of being locally non-satiated from consumer theory.

If $\succeq$ is a preference on $X$ and $u:X\to \Re$ is a function for which $x\succeq y$ if and only if $u(x)\geq u(y)$ then we say that $u$ is a \df{representation} of $\succeq$, or that $u$ is a \df{utility function} for $\succeq$.

If $A\subseteq \Re^d$ is a Borel set, we write $\Delta(A)$ for the set of all Borel probability measures on $A$. We endow $\Delta(A)$ with the weak* topology. If $S$ is a finite set, then we topologize $\Delta(A)^S$ with the product topology. 

For $p,q\in\Delta(A)$, we say that $p$ is larger than $q$ in the sense of \df{first-order stochastic dominance} if $\int_A fdx\geq \int_A fdy$ for all monotone increasing, continuous and bounded functions $f$ on $A$.

\subsection{Topologies on preferences and utilities.}\label{sec:closedconvergence}

The set of preferences over $X$, when $X$ is a topological space, is endowed with the topology of closed convergence. The space of corresponding utility representations is endowed with the compact-open topology. These are the standard topologies  for preferences and utilities, used in prior work in mathematical economics. See, for example, \cite{HILDENBRAND1970161}, \cite{kannai1970continuity}, and \cite{mas1974continuous}. Here we offer definitions and a brief discussion of our choice of topology.

Let $X$ be a topological space, and $\mathcal{F}=\{F^n\}_n$ be a sequence of closed sets in $X \times X$ (with the product topology). We define $\Li(\mathcal{F})$ and $\Ls(\mathcal{F})$ to be closed subsets of $X \times X$ as follows:
\begin{itemize}
	\item $(x,y) \in \Li(\mathcal{F})$ if and only if, for all neighborhoods $V$ of $(x,y)$, there exists $N \in \Na$ such that $F^n \cap V \neq \os$ for all $n \geq N$.
	\item $(x,y) \in \Ls(\mathcal{F})$ if and only if, for all neighborhoods $V$ of $(x,y)$, and all $N \in \Na$, there is  $n \geq N$ such that $F^n \cap V \neq \os$.
\end{itemize}
Observe that $\Li(\mathcal{F})\subseteq \Ls(\mathcal{F})$. The definition of closed convergence is as follows.
\begin{definition}
	$F^n$ converges to $F$ in the \df{topology of closed convergence} if $\Li(\mathcal{F})=F=\Ls(\mathcal{F})$.
\end{definition}

Closed convergence  captures the property that agents with similar preferences should have similar choice behavior---a property that is necessary to be able to learn the preference from finite data. Specifically, if $X\subseteq \Re^n$, and $\P$ is the set of all locally strict and continuous preferences on $X$, then the topology of closed convergence is the smallest topology on $\P$ for which the sets
\[
	\{ (x,y,\succeq) : x\succ y\}\subseteq X\times X\times \P
\]
are open.\footnote{See \cite{kannai1970continuity} and \cite{HILDENBRAND1970161} for a discussion; a proof of this claim is available from the authors upon request.} In words: suppose that $x\succ y$, then for $x'$ close to $x$, $y'$ close to $y$, and $\succeq'$ close to $\succeq$, we obtain that $x'\succ' y'$. 

For utility functions, we adopt the compact-open topology, which we also claim is a natural choice of topology. The compact-open topology is characterized by the convergence criterion of uniform convergence on compact sets. The reason it is natural for utility functions is that a utility usually has two arguments: one is the object being ``consumed'' (a lottery, for example) and the other is the ordinal preference that utility is meant to represent. (The preference argument is usually implicit, but of course it remains a key aspect of the exercise.) Now an analyst wants the utility to be ``jointly continuous,'' or continuous in both of its arguments. For such a purpose, the natural topology on the set of utilities, when they are viewed solely as functions of consumption, is indeed the compact-open topology. More formally, consider the  following result, originally due to \cite{mascolell1977IER}.\footnote{\cite{levin1983continuous} provides a generalization to incomplete preferences.}

\begin{theorem}\label{thm:levin}
    Let $X$ be a locally compact Polish space, and $\P$ the space of all continuous preferences on $X$ endowed with the topology of closed convergence. Then there exists a continuous function $U:\P\times X\to [0,1]$ so that $x\mapsto U(\succeq,x)$ represents $\succeq$.
\end{theorem}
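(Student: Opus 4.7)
The plan is to give the canonical integral construction due to \cite{mascolell1977IER}. Since $X$ is locally compact Polish, it is second countable and $\sigma$-compact, and admits a compatible metric $d$ whose closed bounded balls are compact. Fix such a metric, a Borel probability measure $\mu$ on $X$ of full support (available by second countability and Polishness), and a strictly positive continuous weight $h\colon X\to (0,1]$ that is $\mu$-integrable (available by local compactness via an exhaustion $K_1\subseteq K_2\subseteq\cdots$ of $X$ by compacts and a Urysohn-type construction). The overall strategy is to define $U(\succeq,x)$ as an integral against $\mu$ of a ``comparison'' function between $x$ and $y$ under $\succeq$, designed so that closed convergence of $\succeq$ translates into continuity of $U$.

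Concretely, for each $\succeq\in\P$ and $x\in X$, let $L(\succeq,x)=\{y\in X: x\succeq y\}$, which is closed by continuity of $\succeq$. Define a jointly continuous comparison function $\alpha\colon \P\times X\times X\to[0,1]$ by a Urysohn-style formula, for instance
\[
\alpha(\succeq,x,y) \;=\; \frac{d(y,L(\succeq,x)^c)}{d(y,L(\succeq,x)^c)+d(y,L(\succeq,x))},
\]
with the conventions $d(y,\os)=\infty$ and $\infty/\infty=1$. Then set
\[
U(\succeq,x) \;=\; c\int_X \alpha(\succeq,x,y)\,h(y)\,d\mu(y),
\]
with $c>0$ chosen so that the values lie in $[0,1]$.

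Verifying that $x\mapsto U(\succeq,x)$ represents $\succeq$ uses monotonicity of contour sets: $x\succeq x'$ implies $L(\succeq,x)\supseteq L(\succeq,x')$, so $\alpha(\succeq,x,\cdot)\geq \alpha(\succeq,x',\cdot)$ pointwise; when $x\succ x'$, continuity of $\succeq$ makes the ``gap'' $\{y: x\succ y\succ x'\}$ open and non-empty, and by full support of $\mu$ this yields strict inequality of the integrals. Joint continuity of $(\succeq,x)\mapsto U(\succeq,x)$ reduces, via dominated convergence, to joint continuity of $\alpha$, and hence to continuity of the assignment $(\succeq,x)\mapsto L(\succeq,x)$ into closed subsets of $X$ in the Fell (equivalently, Paineleve--Kuratowski on locally compact spaces) topology. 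This last continuity is essentially the content of closed convergence, since $\succeq^k\to\succeq$ as closed subsets of $X\times X$ passes to $x$-slices when $x^k\to x$, and local compactness upgrades Painlev\'e--Kuratowski convergence of closed sets to continuity of the distance-to-set functionals appearing in $\alpha$.

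The main obstacle is reconciling joint continuity with strict representation. The naive choice $\alpha(\succeq,x,y)=\one[x\succeq y]$ trivially represents $\succeq$ but fails continuity exactly at indifference: if $x_k\to x$ with $x_k\succ x_{k+1}$ all $\succ$-equivalent in the limit, indicators jump. The distance-smoothing above restores continuity but must be designed delicately so that the local-smoothing region near the boundary of $L(\succeq,x)$ is small enough not to wash out strict comparisons; the role of $h$ and of the full-support measure $\mu$ is precisely to keep the integrand sensitive to arbitrarily small order-theoretic differences. Threading this needle is the technical heart of the construction, and it is where local compactness of $X$ is essential, both to ensure the integral is well-defined and to guarantee that contour sets vary continuously with $\succeq$ in a metric-compatible way.
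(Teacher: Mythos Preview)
The paper does not prove Theorem~\ref{thm:levin}; it is quoted as a known result of \cite{mascolell1977IER} (with the generalization attributed to \cite{levin1983continuous}) and is invoked only to motivate the choice of the compact-open topology. So there is no ``paper's own proof'' to compare against.

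As for your sketch on its own merits, there are genuine gaps. First, your Urysohn formula for $\alpha$ is undefined on the indifference surface: if $y$ lies on the boundary of $L(\succeq,x)$ then $d(y,L(\succeq,x))=d(y,L(\succeq,x)^c)=0$, and your stated conventions do not cover $0/0$. Second, and more seriously, the assertion that $(\succeq,x)\mapsto L(\succeq,x)$ is continuous into the Fell hyperspace does not follow from closed convergence of $\succeq^k$ in $X\times X$: for the $\Li$ direction you would need, given $x\succeq y$ and $x^k\to x$, to produce $y_k\to y$ with $x^k\succeq^k y_k$, but closed convergence only hands you pairs $(a_k,b_k)\in{\succeq^k}$ with $(a_k,b_k)\to(x,y)$, and you cannot force $a_k=x^k$. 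This is exactly the step where the real work lies, and ``essentially the content of closed convergence'' is not an argument. Third, the claim that the order interval $\{y:x\succ y\succ x'\}$ is nonempty whenever $x\succ x'$ is false without connectedness; your strict-representation argument needs a different route (e.g.\ working directly with the set $\{y:x\succeq y\}\setminus\{y:x'\succeq y\}$, which always contains $x$ and, by continuity, a neighborhood of it inside the strict upper set of $x'$).
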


We may view the map $U$ as a mapping from $\succeq$ to the space of utility functions.  Then continuity of this induced mapping is equivalent to the joint continuity result discussed in Theorem~\ref{thm:levin}, as long as we impose the compact-open topology on the space of utility functions (see \cite{foxBAMS1945}).

\subsection{The model}\label{sec:themodel}
As laid our in Section~\ref{sec:thequestion}, we want to understand when we may conclude that $u^k\to u$ from knowing that $\succeq^k\to\succeq$.  Mas-Colell's theorem (Theorem~\ref{thm:levin}) provides general conditions under which there exists {\em one} utility representation that has the requisite convergence property, but he is clear about the practical limitations of his result:  ``There is probably not a simple constructive (``canonical'') method to find a $U$ function.''  In contrast, economists are generally interested in {\em specific} parameterizations of utility.  

For example, if an agent has subjective expected-utility preferences, economists want to estimate beliefs and a von-Neumann-Morgenstern index; not some arbitrary representation of the agent's preferences. Or, if the data involve intertemporal choices, and the agent discounts utility exponentially, then an economist will want to estimate their discount factor. Such specific parameterizations of utility are not meaningful in the context of Theorem~\ref{thm:levin}.

The following (trivial) example shows that there is indeed a problem to be studied. Convergence of arbitrary utility representations to the correct limit is not guaranteed, even when recovered utilities form a convergent sequence, and recovered preferences converge to the correct limit.

\begin{example} Consider expected-utility preferences on $\Delta(K)^S$, where $K$ is a compact space, $S$ a finite set of states, and $\Delta^S(K)$ is the set of Anscombe-Aumann acts. Fix an affine function $v:\Delta(K)\to\Re$, a prior $\mu\in\Delta(S)$, and consider the preference $\succeq$ with representation $\int_S v(f(s))\diff \mu(s)$. 

Now if we set $\succeq^k=\succeq$ then $\succeq^k\to\succeq $ holds trivially. However, it is possible to choose an expected utility representation $\int_S v^k(f(s))\diff \mu^k(s)$ that does not converge to a utility representation (of any kind) for $\succeq$. In fact one could choose a $\mu^k$ and a ``normalization'' for $v^k$, for example $\norm{v^k}=1$ (imagine for concreteness that $K$ is finite, and use the Euclidean norm for $v^k$). Specifically, choose scalars $\beta^k$ with $\norm{\beta^k + \frac{1}{k}v}=1$. Then the utility $f\mapsto \int_S v^k(f(s))\diff \mu(s)$ represents $\succeq^k$ and converges to a constant function. 

The punchline is that the limiting utility represents the preference that exhibits complete indifference among all acts. This is true, no matter what the original preference $\succeq$ was.
\end{example}

In the example, we have imposed some discipline on the representation. Given that the utility converges to a constant, the discipline we have chosen is a particular normalization of the utility representations  (their norm is constant). The normalization just makes the construction of the example slightly more challenging, and reflects perhaps the most basic care that an analyst could impose on the recovery exercise.

\subsection{Anscombe-Aumann acts}\label{sec:AAresults}

We present our first main result in the context of Anscombe-Aumann acts, the workhorse model of the modern theory of decisions under uncertainty. Let $S$ be a finite set of \df{states of the world}, and fix a closed interval of the real line $[a,b]\subseteq \Re$. An \df{act} is a function $f:S\to \Delta([a,b])$. We interpret the elements of $\Delta([a,b])$ as \df{monetary lotteries}, so that acts are state-contingent monetary lotteries. The set of all acts is  $\Delta([a,b])^S$. When $p\in\Delta([a,b])$, we denote the \df{constant act} that is identically equal to $p$ by $(p,\ldots,p)$; or sometimes by $p$ for short.

Note that we do not work with abstract, general, Anscombe-Aumann acts, but in assuming monetary lotteries we impose a particular structure on the objective lotteries in our Anscombe-Aumann framework. The reason is that our theory necessitates a certain {\em known and objective} direction of preference. Certain preference comparisons must be known {\em a priori}: monotonicity of preference will do the job, but for monotonicity to be objective we need the structure of monetary lotteries.

An act $f$ \df{dominates} an act $g$ if, for all $s\in S$, $f(s)$ first-order stochastic dominates $g(s)$.  And $f$ \df{strictly dominates}  $g$ if, for all $s\in S$, $f(s)$ strictly first-order stochastic dominates $g(s)$. A preference $\succeq$ over acts is \df{weakly monotone} if $f\succeq g$ whenever $f$ first-order stochastic dominates $g$.

Let $U$ be the set of all continuous and monotone weakly increasing functions $u:[a,b]\to\Re$ with $u(a)=0$ and $u(b)=1$.  A pair $(V,u)$ is a \df{standard representation} if $V:\Delta([a,b])^S\to\Re$ and $u\in U$ are continuous functions such that $v(p,\ldots,p)=\int_{[a,b]}u \diff p$, for all constant acts $(p,\ldots,p)$. Moreover, we say that a standard representation $(V,u)$ is \df{aggregative} if there is an \df{aggregator} $H:[0,1]^S\to\Re$ with $V(f)=H((\int u \diff f(s))_{s\in S})$ for $f\in \Delta([a,b])^S$. An aggregative representation with aggregator $H$ is denoted by $(V,u,H)$.  Observe that a standard representation rules out total indifference.

A preference $\succeq$ on $\Delta([a,b])^S$ is \df{standard} if it is weakly monotone, and there is a standard representation $(V,u)$ in which $V$ represents $\succeq$.  Roughly, standard preferences will be those that satisfy the expected utility axioms across constant acts, and are monotone with respect to the (statewise) first order stochastic dominance relation.  Aggregative preferences will additionally satisfy an analogue of Savage's P3 or the Anscombe-Aumann notion of monotonicity. 

\begin{example}
Variational preferences \citep{maccheroni2006ambiguity} are standard and aggregative.\footnote{Variational preferences are widely used in macroeconomics and finance to capture decision makers' concerns for using a misspecified model. Here it is important to recover the different components of a representation, $v$ and $c$, because they quantify key features of the environment. See for example \cite{hansen2001robust,HANSEN200645,hansen2022risk}.}
Let \[ 
V(f) = \inf\{ \int v(f(s)) d\pi (s) + c(\pi) : \pi\in\Delta(S) \}
\]
where 
\begin{enumerate}
\item $v:\Delta([a,b])\to\Re$ is continuous and affine.
    \item $c:\Delta(S)\to [0,\infty]$ is lower semicontinuous, convex and grounded (meaning that $\inf \{c(\pi):\pi\in \Delta(S) \}=0$). 
\end{enumerate} 

Note that $V(p,\ldots,p)= v(p) + \inf\{ c(\pi) : \pi\in\Delta(S) \} = \int u\diff p$, by the assumption that $c$ is grounded, and where the existence of $u:[a,b]\to\Re$ so that $v(p)=\int u\diff p$ is an instance of the Riesz representation theorem. It is clear that we may choose $u\in U$. So $(V,u)$ is a standard representation.

Letting $H:[0,1]^S\to \Re$ be defined by $H(x) = \inf\{ \sum_{s\in S} x(s) \pi(s) + c(\pi) : \pi\in\Delta(S) \}$, we see that indeed $(V,u,H)$ is also an aggregative representation of these preferences.
\end{example}

Some other examples of aggregative preferences include special cases of the variational model \citet{gilboa1989}, as well as generalizations of it, \citet{cerreia2011,chandrasekher2021}, and others which are not comparable \citet{schmeidler1989,chateauneuf2008,chateauneuf2009}.\footnote{A class of variational preferences that are of particular interest to computer scientists are preferences with a max-min representation \citep{gilboa1989}. These evaluate acts by 
\[ 
V(f) = \inf\{ \int v(f(s)) d\pi (s) : \pi\in\Pi \},
\] with $\Pi\subseteq \Delta(S)$ a closed and convex set. Here $c$ is the indicator function of $\Pi$ (as defined in convex analysis).
}

\begin{theorem}\label{thm:convergenceutilities} Let $\succeq$ be a standard preference with standard representation $(V,u)$, and $\{\succeq^k\}$ a sequence of standard preferences, each with a standard representation $(V^k,u^k)$. \begin{enumerate}
    \item If $\succeq^k\to \succeq$, then $(V^k,u^k)\to (V,u)$.
    \item If, in addition, these preferences are aggregative with representations $(V^k,u^k,H^k)$ and $(V,u,H)$, then $H^k\to H$.
\end{enumerate}
\end{theorem}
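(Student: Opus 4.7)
The strategy is to recover the three objects in sequence: first the Bernoulli index $u^k$, next the global representation $V^k$, and finally the aggregator $H^k$. The central technical fact I will use repeatedly is that closed convergence preserves \emph{strict} preference: if $\succeq^k\to\succeq$ and $x\succ y$, then $(y,x)\notin\Ls(\succeq^k)=\succeq$, which by definition of $\Ls$ means there is a neighborhood of $(y,x)$ that eventually misses $\succeq^k$; in particular $x\succ^k y$ for $k$ large. Symmetrically, if $x^k\succeq^k y^k$ and $(x^k,y^k)\to(x,y)$, then $(y,x)\in\Li(\succeq^k)=\succeq$, so $x\succeq y$. Throughout I write $p_\alpha=\alpha\delta_b+(1-\alpha)\delta_a$ for $\alpha\in[0,1]$, and note that the normalization $u^k(a)=0$, $u^k(b)=1$ gives $\int u^k\,dp_\alpha=\alpha$ for every $k$, and likewise for $u$. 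In particular $V^k(p_\alpha,\ldots,p_\alpha)=\alpha$, so by weak monotonicity (which forces $V^k(\delta_a,\ldots,\delta_a)=0\le V^k(f)\le 1=V^k(\delta_b,\ldots,\delta_b)$) each $V^k$ maps into $[0,1]$ and $\alpha^k(f):=V^k(f)$ is a well-defined ``certainty equivalent index'' with $f\sim^k (p_{\alpha^k(f)},\ldots,p_{\alpha^k(f)})$.

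\textbf{Step 1 (convergence of $u^k$).} Fix $x\in[a,b]$. For any $\alpha>u(x)$, on constant acts $V(p_\alpha,\ldots)=\alpha>u(x)=V(\delta_x,\ldots,\delta_x)$, so $(p_\alpha,\ldots)\succ(\delta_x,\ldots)$; by the strict-preference-preservation property, eventually $\int u^k\,dp_\alpha=\alpha>u^k(x)$, hence $\limsup u^k(x)\le\alpha$. Letting $\alpha\downarrow u(x)$ yields $\limsup u^k(x)\le u(x)$, and the symmetric argument with $\alpha<u(x)$ gives $\liminf u^k(x)\ge u(x)$. Thus $u^k\to u$ pointwise. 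Because each $u^k$ is monotone and the limit $u$ is continuous on the compact interval $[a,b]$, pointwise convergence of a sequence of monotone functions to a continuous function on a compact interval is uniform (the standard Pólya/Dini-type upgrade), so $u^k\to u$ uniformly.

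\textbf{Step 2 (convergence of $V^k$).} Since $\Delta([a,b])^S$ is compact in the weak$^*$ product topology, it suffices to show $V^k\to V$ uniformly, and I will argue by contradiction. If not, there exist $\varepsilon>0$ and acts $f^k$ with $|V^k(f^k)-V(f^k)|\ge\varepsilon$. Passing to a subsequence, $f^k\to f^\star$ and $\alpha^k(f^k)=V^k(f^k)\to\alpha^\star\in[0,1]$; by continuity of $V$ we have $V(f^k)\to V(f^\star)$, and the map $\alpha\mapsto(p_\alpha,\ldots,p_\alpha)$ is continuous into $\Delta([a,b])^S$. Since $f^k\sim^k(p_{\alpha^k(f^k)},\ldots)$, the limit-of-preferred-pairs property gives $f^\star\sim(p_{\alpha^\star},\ldots)$, whence $V(f^\star)=\alpha^\star$. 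But then $V^k(f^k)-V(f^k)\to\alpha^\star-V(f^\star)=0$, contradicting $|\cdot|\ge\varepsilon$. I expect this to be the main obstacle, because it is the only place where we need to handle non-constant acts and rely in full on closed convergence of the preferences, rather than just on a scalar calculation on a one-parameter family.

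\textbf{Step 3 (convergence of $H^k$).} For any $z\in[0,1]^S$, the act $f_z\in\Delta([a,b])^S$ defined by $f_z(s)=p_{z_s}$ satisfies $\int u^k\,df_z(s)=z_s$ for every $k$ (and the same for $u$), by the normalization noted above. Consequently $H^k(z)=V^k(f_z)$ and $H(z)=V(f_z)$, so
\[
\sup_{z\in[0,1]^S}\bigl|H^k(z)-H(z)\bigr|=\sup_{z\in[0,1]^S}\bigl|V^k(f_z)-V(f_z)\bigr|\le\sup_{f\in\Delta([a,b])^S}\bigl|V^k(f)-V(f)\bigr|,
\]
which tends to zero by Step 2. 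Since $[0,1]^S$ is compact, this is compact-open convergence, completing the proof.
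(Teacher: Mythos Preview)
Your proof is correct. Steps~1 and~2 follow the same certainty-equivalent strategy as the paper: pick a one-parameter family of constant acts, use closed convergence to transport indifferences and strict preferences to the limit, and read off the conclusion from the normalization. The paper uses the Dirac constant acts $(\delta_x,\ldots,\delta_x)$ together with the intermediate value theorem to locate the certainty equivalent, whereas you use the two-point lotteries $p_\alpha$, for which $V^k(p_\alpha,\ldots,p_\alpha)=\alpha$ holds identically in $k$; these are interchangeable here. (One small typo in your preamble: ``$(y,x)\in\Li(\succeq^k)$'' should read ``$(x,y)\in\Li(\succeq^k)$''; your conclusion $x\succeq y$ is the intended one.)

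Your Step~3 is genuinely simpler than the paper's. The paper proceeds by choosing, for each coordinate $z^k(s)$, a point $y^k(s)\in[a,b]$ with $u^k(y^k(s))=z^k(s)$ (using surjectivity of $u^k$), then extracts convergent subsequences of both $y^k$ and an auxiliary scalar certainty equivalent, and finally appeals to the already-established convergence of $u^k$ and $V^k$. Your observation that $\int u^k\,dp_\alpha=\alpha$ for \emph{every} $k$ lets you write $H^k(z)=V^k(f_z)$ for a single, $k$-independent act $f_z$, so $\sup_z|H^k(z)-H(z)|\le\sup_f|V^k(f)-V(f)|$ and uniform convergence of $H^k$ is immediate from Step~2. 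This bypasses the inversion of $u^k$ and all the subsequence bookkeeping; the paper's route works without the two-point-lottery device, but yours is shorter and more transparent.
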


In terms of interpretation, Theorem~\ref{thm:convergenceutilities} suggests that, as preferences converge, risk-attitudes, or von Neumann morgenstern utility indices also converge in a pointwise sense.  The aggregative part claims that we can study the convergence of risk attitudes and the convergence of the aggregator controlling for risk separately.  So, for example, in the multiple priors case, two decision makers whose preferences are close will have similar sets of priors.

\subsection{Preferences over lotteries and certainty equivalents}\label{sec:certaintyequiv}

In this section, we focus on a canonical representation for preferences over lotteries:  the certainty equivalent.  There are many models of preferences over lotteries, but we have in mind in particular \cite{cerreia2015cautious}, whereby a preference representation over lotteries is given by $U(p) = \inf_{u\in \mathcal{U}}u^{-1}(\int u dp)$; a minimum over a set of certainty equivalents for expected utility maximizers.  Key is that for this representation, and any degenerate lottery $\delta_x$, $U(\delta_x)=x$.  

Let $[a,b]\subset \Re$, where $a<b$, be an interval in the real line and consider $\Delta([a,b])$.  Say that $\succeq$ on $\Delta([a,b])$ is \emph{certainty monotone} if when ever $p$ first order stochastically dominates $q$, then  $p \succeq q$, and for all $x,y\in [a,b]$ for which $x> y$, $\delta_x \succ \delta_y$.  Any certainty monotone continuous preference $\succeq$ and any lottery $p\in\Delta([a,b])$ then possesses a unique \df{certainty equivalent} $x\in [0,1]$, satisfying $\delta_x \sim p$.  To this end, we define $\mbox{ce}(\succeq,p)$ to be the certainty equivalent of $p$ for $\succeq$.  It is clear that, fixing $\succeq$, $\mbox{ce}(\cdot,\succeq)$ is a continuous utility representation of $\succeq$.  

\begin{proposition}\label{prop:certaintyequiv}Let $\succeq$ be a certainty monotone preference and let $p\in \Delta([a,b])$.  Let $\{\succeq^k\}$ be a sequence of certainty monotone preferences and let $p^k$ be a sequence in $\Delta([a,b])$.  If $(\succeq^k,p^k)\rightarrow (\succeq,p)$, then $\mbox{ce}(\succeq^k,p^k)\rightarrow \mbox{ce}(\succeq,p)$.\end{proposition}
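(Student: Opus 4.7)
The plan is a subsequence argument on $x^k = \mbox{ce}(\succeq^k,p^k)$, exploiting compactness of the interval $[a,b]$. Since $\{x^k\}\subseteq [a,b]$, to prove $x^k\to x := \mbox{ce}(\succeq,p)$ it suffices to show that every convergent subsequence has limit $x$. So I would fix a subsequence with $x^{k_j}\to x^*$ and aim to verify $x^*=x$.

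The engine is the closed-convergence topology on preferences. By definition of the certainty equivalent, $\delta_{x^{k_j}}\sim^{k_j} p^{k_j}$, so both ordered pairs $(\delta_{x^{k_j}},p^{k_j})$ and $(p^{k_j},\delta_{x^{k_j}})$ lie in the relation $\succeq^{k_j}$. The map $x\mapsto \delta_x$ from $[a,b]$ into $\Delta([a,b])$ is continuous in the weak* topology, so $\delta_{x^{k_j}}\to\delta_{x^*}$; combined with $p^{k_j}\to p$, the two pair-sequences converge to $(\delta_{x^*},p)$ and $(p,\delta_{x^*})$ in $\Delta([a,b])\times\Delta([a,b])$. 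The $\Li$ clause in the definition of closed convergence then places both limit pairs in $\succeq$, yielding $\delta_{x^*}\sim p$. A preliminary sanity check is that certainty equivalents are well-defined to begin with: since $\delta_b$ dominates $p$, which in turn dominates $\delta_a$ in first-order stochastic order, continuity of $\succeq$ together with strict monotonicity of $x\mapsto \delta_x$ along Dirac measures (both consequences of certainty monotonicity) produce, via an intermediate-value argument, a unique $x$ with $\delta_x\sim p$. Hence $\delta_{x^*}\sim p \sim \delta_x$ forces $x^*=x$, and the subsequence argument closes.

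I do not anticipate a serious obstacle here. The only subtlety worth flagging is symmetry: one has to pass to the limit in \emph{both} directions of the indifference $\delta_{x^{k_j}}\sim^{k_j} p^{k_j}$ in order to conclude $\sim$ rather than merely $\succeq$ in the limit. This is immediate from the closed-graph formulation of closed convergence, since both pair-sequences $(\delta_{x^{k_j}},p^{k_j})$ and $(p^{k_j},\delta_{x^{k_j}})$ are handled identically by the $\Li$ clause.
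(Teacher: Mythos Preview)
Your proof is correct and follows essentially the same route as the paper: a compactness/subsequence argument on $[a,b]$, passing the indifference $\delta_{x^{k_j}}\sim^{k_j}p^{k_j}$ to the limit via closed convergence, and then invoking uniqueness of the certainty equivalent (from certainty monotonicity) to pin down the limit. The paper frames it as a proof by contradiction rather than the ``every convergent subsequence has the right limit'' formulation, but the content is identical.
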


To this end, the map carrying each preference to its certainty equivalent representation is a continuous map in the topology of closed convergence.

\section{Utility recovery with noisy choice data}\label{sec:noise}

We develop a model of noisy choice data, and consider when utility may be recovered from a traditional estimation procedure. Recovery here takes the form of an explicit consistency result, together with sample complexity bounds in a PAC learning framework.

The focus is on the \df{Wald representation}, analogous to the certainty equivalent we considered in Section~\ref{sec:certaintyequiv}. When choosing among vectors in $x\in \Re^d$, the Wald representation is $u(x)\in \Re$ so that 
\[ 
x\sim (u(x),\ldots,u(x)).
\]  If the choice space is well behaved, a Wald representation exists for any monotone and continuous preference relation. To this end, we move beyond the Anscombe-Aumann setting that we considered above, but it should be clear that some versions of Anscombe-Aumann can be accommodated within the assumptions of this section.

Our main results for the model that explicitly accounts for noisy choice data assumes Wald representations that are either Lipschitz or homogeneous (meaning that preferences are homothetic). 

\subsection{Noisy choice data}\label{sec:noisydata}

The primitives of our noisy choice model are collected in the tuple  $(X,\P,\la,q)$, where:
\begin{itemize}
    \item $X\subseteq\Re^d$ is the ambient choice, or consumption, space. The set $X$ is endowed with the (relative) topology inherited from $\Re^d$.
    \item $\P$ is a class of continuous and locally strict preferences on $X$. The class comes with a set of utility functions $\U$, so that each element of $\P$ has a utility representation in the set $\U$.
    \item $\la$ is a probability measure on $X$, assumed to be absolutely continuous with respect to Lebesgue measure. We also assume that $\la\geq c \,\mathrm{Leb}$, where $c>0$ is a constant and Leb denotes Lebesgue measure. 
    \item $q:X\times X\times \P\to [0,1]$ is a random choice function, so $q(x,y; \succeq)$ is the probability that an agent with preferences $\succeq$ chooses $x$ over $y$. Assume that if $x\succ y$, then $x$ is chosen with probability $q(x,y;\succeq)>1/2$ and $y$ with probability $q(y,x;\succeq^*)=1-q(x,y;\succeq)$. If $x\sim y$ then $x$ and $y$ are chosen with equal probability.
    \item We shall assume that the error probability $q$ satisfies that \[
\Theta\equiv \inf\{q(\succeq,(x,y)): x\succ y \text{ and } \succeq\in\P\} > \frac{1}{2}.
\]
\end{itemize}

The tuple $(X,\P,\la,q)$ describes a data-generating process for noisy choice data. Fix a sample size $n$ and consider an agent with preference $\succeq^*\in \P$. A sequence of choice problems $\{x_i,y_i\}$, $1\leq i\leq n$ are obtained by drawing $x_i$ and $y_i$ from $X$, independently, according to the law $\lambda$. Then a choice is made from each problem $\{x_i,y_i\}$ according to $q(\cdot,\cdot;\succeq^*)$.

Observe that our assumptions on $q$ are mild. We allow errors to depend on the pair $\{x,y\}$ under consideration, almost arbitrarily. The only requirement is that one is more likely to choose according to one's preference than to go against them, as well as the more technical assumptions of measurability and a control on how large the deviation from $1/2$-$1/2$ choice may get.

To keep track of the chosen alternative, we order the elements of each problem so that $(x_i,y_i)$ means that $x_i$ was chosen from the choice problem $\{x_i,y_i\}$. So a sample of size $n$ is $\{(x_1,y_1),\ldots,(x_n,y_n) \}$, consisting of $2n$ iid draws from $X\times X$ according to our stochastic choice model: in the $i$th draw, the choice problem was $\{x_i,y_i\}$ and $x_i$ was chosen.

A utility function $u_n\in \U$ is chosen to maximize the number of rationalized choices in the data. So $u_n$ maximizes $\sum_{i=1}^n \one_{u(x_i)\geq u(y_i)}$. The space of utility functions is endowed with a metric, $\rho$. In this section, all we ask of $\rho$ is that, for any $u,u'\in \U$, there is $x\in X$ with $\abs{u(x)-u'(x)}\geq \rho(u,u')$. For example, we could use the sup norm for the purposes of any of the results in this section.

\subsubsection{Lipschitz utilities}\label{sec:lipsch}
One set of sufficient conditions will need the family of relevant utility representations to satisfy a Lipschitz property with a common Lipschitz bound. The representations are of the Wald kind, as in Section~\ref{sec:certaintyequiv}. We now add the requirement of having the Lipschitz property, which allows us to connect differences in utility functions to quantifiable observable (but noisy) choice behavior. The main idea is expressed in Lemma~\ref{lem:Lipsch} of Section~\ref{sec:proofs}.

We say that $(X,\P,\la,q)$ is a \df{Lipschitz environment} if:
\begin{enumerate}
  \item $X\subseteq\Re^d$
is convex, compact, and has nonempty  interior. \item Each preference $\succeq\in \P$
has a Wald utility representation $u_\succeq:X\to\Re$ so that $x\sim u_\succeq
(x)\one$. \item All utilities in $\U$ are Lipschitz, and admit a common Lipschitz
constant $\kappa$. So, for any $x,x'\in X$ and $u\in \U$, $|u(x)-u(x')|\leq \kappa
\norm{x-x'}$.  \end{enumerate}

\subsubsection{Homothetic preferences}\label{sec:homoth}
The second set of sufficient conditions involve homothetic preferences. It turns out, in this case, that the Wald representations have a homogeneity property, and this allows us to connect differences in utilities to a probability of detecting such differences. The key insights is contained in Lemma~\ref{lem:homothetic} of Section~\ref{sec:proofs}.

We employ the following auxiliary notation. $S^M_\al = \{x\in \Re^d : \norm{x}=M \text{ and } x\geq \al\one\}$ and  $D^M_\al = \{\ta x:x\in S^M_\al \text{ and } \ta\in [0,1]\}$.

We say that $(X,\P,\la,q)$ is a \df{homothetic environment} if:
\begin{enumerate}
\item $X=D^M_\al$ for some (small) $\al>0$ and (large) $M>0$.
\item $\P$ is a class of continuous, monotone, homothetic, and complete preferences on
  $X\subseteq \Re^d$.
\item $\U$ is a class of Wald representations, so that for each $\succeq\in \P$
  there is a utility function $u\in \U$ with $x\sim u(x)\one$.
\end{enumerate}

Remark: if $u\in U$ is the Wald representation of $\succeq$, then $u$ is homogeneous of degree one because $x\sim u(x)\one$ iff $\la x \sim \la u(x)\one$, so $u(\la x)=\la u(x)$. 

\subsubsection{VC dimension} The Vapnik-Chervonenkis (VC) dimension of a set $\P$ of preferences is the largest sample size $n$ for which there exists a utility $u\in \U$ that perfectly rationalizes all the choices in the data, no matter what those are. That is so that $n=\sum_{i=1}^n \one_{u(x_i)\geq u(y_i)}$ for any dataset $(x_i,y_i)_{i=1}^n$ of size $n$.

VC dimension is a basic ingredient in the standard PAC learning paradigm. It is a measure of the complexity of a theory used in machine learning, and lies behind standard results on uniform laws of large numbers (see, for example, \cite{boucheron2005theory}). Applications of VC to decision theory can be found in \cite{basu2020falsifiability} and \cite{chambers2021recovering}. 

It is worth noting that VC dimension is used in classification tasks. It may not be obvious, but when it comes to preferences, our exercise may be thought of as classification. For each pair of alternatives $x$ and $y$, a preference $\succeq$ ``classifies'' the pair as $x\succeq y$ or $y\succ x$. Then we can think of preference recovery as a problem of learning a classifier within the class $\P$. 

\subsection{Consistency and sample complexity}

\begin{theorem}\label{thm:noisychoice}
    Consider a noisy choice environment $(X,\P,\la,q)$ that is either a homothetic or a Lipschitz environment. Suppose that $u^*\in \U$ is the Wald utility representation of~$\succeq^*\in \P$. 
  \begin{enumerate}
  \item The estimates $u_n$ converge to $u^*$ in probability.
    \item 
 There are constants $K$ and $\bar C$ so that, for any $\da\in (0,1)$ and $n$, with probability at least $1-\da$, 
 \[
\rho(u_n,u^*)\leq \bar C \left(K\sqrt{V/n} + \sqrt{2\ln (1/\da)/n} \right)^{1/D},
\] where $V$ is the VC dimension of $\P$, $D=d$ when the environment is Lipschitz and $D=2d$ when it is homothetic.
\end{enumerate}
\end{theorem}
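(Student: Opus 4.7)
The plan is to cast the problem as empirical risk minimization (ERM) with a $0/1$ rationalization loss, apply a VC-based uniform law of large numbers to control the excess population risk, and then convert this risk bound into a $\rho$-bound using the Wald structure encapsulated in Lemmas \ref{lem:Lipsch} and \ref{lem:homothetic}. Define the population risk
\[
L(u) \;=\; \E_{(x,y)\sim\la\otimes\la}\!\bigl[q(x,y;\succeq^*)\one_{u(x)\geq u(y)} + q(y,x;\succeq^*)\one_{u(y)\geq u(x)}\bigr],
\]
with empirical counterpart $L_n(u)=n^{-1}\sum_{i=1}^n\one_{u(x_i)\geq u(y_i)}$, so that $u_n\in\argmax_{u\in\U}L_n(u)$. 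Using that on each pair with $x\succ^* y$ a choice consistent with $u^*$ has probability at least $\Theta$ while a choice consistent with any disagreeing $u$ has probability at most $1-\Theta$, a short calculation yields
\[
L(u^*)-L(u) \;\geq\; (2\Theta-1)\,(\la\otimes\la)\bigl(A(u,u^*)\bigr),
\]
where $A(u,u^*)=\{(x,y):u^*\text{ strictly orders }(x,y)\text{ and }u\text{ disagrees}\}$. Reading each $u\in\U$ as a binary classifier $(x,y)\mapsto\one_{u(x)\geq u(y)}$ on $X\times X$, this class inherits VC dimension $V$ from $\P$ (the classification interpretation of Section~\ref{sec:noisydata}), so a standard Massart/Bousquet-type bound (e.g., \cite{boucheron2005theory}) gives
\[
\sup_{u\in\U}\lvert L_n(u)-L(u)\rvert \;\leq\; K\sqrt{V/n}+\sqrt{2\ln(1/\da)/n}
\]
with probability at least $1-\da$. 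Combining this with $L_n(u_n)\geq L_n(u^*)$ yields
\[
(\la\otimes\la)(A(u_n,u^*)) \;\leq\; \tfrac{2}{2\Theta-1}\!\left(K\sqrt{V/n}+\sqrt{2\ln(1/\da)/n}\right).
\]

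The next step is to convert this disagreement-measure bound into a $\rho$-bound via Lemmas \ref{lem:Lipsch} and \ref{lem:homothetic}, which I expect to take the form of a polynomial lower bound $(\la\otimes\la)(A(u,u^*))\geq c\,\rho(u,u^*)^{D}$, with $D=d$ in the Lipschitz case and $D=2d$ in the homothetic case. To prove such a lemma I would start from the defining property of $\rho$ to extract $x_0\in X$ with $|u(x_0)-u^*(x_0)|\geq\rho(u,u^*)=:\ep$; use the Lipschitz property (resp.\ the degree-one homogeneity of Wald representations together with the cone-sphere structure $X=D^M_\al$) to show that $|u-u^*|$ remains of order $\ep$ on an explicit neighborhood of $x_0$; pair points of that neighborhood with certainty-equivalent targets of the form $y=\ta\one$, exploiting $u(\ta\one)=\ta=u^*(\ta\one)$, to produce strict disagreement; and finally thicken the $y$-locus inside $X$ using Lipschitz/homogeneity into a set whose Lebesgue measure admits an explicit lower bound in $\ep$. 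The hypothesis $\la\geq c\,\mathrm{Leb}$ upgrades this Lebesgue bound to a $(\la\otimes\la)$-bound. Inverting the polynomial inequality and substituting the concentration bound above delivers the advertised inequality $\rho(u_n,u^*)\leq\bar C(\cdots)^{1/D}$.

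Part (1) is then an immediate corollary: picking $\da_n\to 0$ slowly enough that the right-hand side of part (2) tends to zero gives $\Pr[\rho(u_n,u^*)>\ep]\to 0$ for every $\ep>0$. I expect the main obstacle to lie entirely in the geometric step just described, namely exhibiting a product-of-balls lower bound on $(\la\otimes\la)(A(u,u^*))$ from a single pointwise discrepancy. The homothetic case is the more delicate of the two, since the homogeneity constraint restricts where constant-act targets $y=\ta\one$ may be placed inside $X=D^M_\al$ and forces a careful use of the sphere $S^M_\al$ to lift a pointwise discrepancy into a disagreement region of the correct volume. Every other piece of the argument is the standard VC/ERM template.
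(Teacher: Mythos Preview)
Your approach is essentially the paper's: empirical risk maximization, a VC-based uniform deviation bound on $\sup_{\succeq\in\P}\abs{\mu_n(\succeq)-\mu(\succeq)}$, the decomposition
\[
\mu(\succeq^*)-\mu(\succeq_n)=\bigl[\mu(\succeq^*)-\mu_n(\succeq^*)\bigr]+\bigl[\mu_n(\succeq^*)-\mu_n(\succeq_n)\bigr]+\bigl[\mu_n(\succeq_n)-\mu(\succeq_n)\bigr]
\]
with the middle term nonpositive by optimality of $\succeq_n$, and then Lemmas~\ref{lem:Lipsch} and~\ref{lem:homothetic} to invert into a $\rho$-bound. Your geometric sketch of those lemmas---a neighborhood of the witness point $x_0$ crossed with an interval of diagonal targets $\theta\one$, then the hypothesis $\la\geq c\,\mathrm{Leb}$ to pass from volume to $\la\otimes\la$---is exactly how the paper proceeds. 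The only cosmetic difference is packaging: the paper states the lemmas directly as $\mu(\succeq,\succeq)-\mu(\succeq',\succeq)\geq C\,\rho(u,u')^{D}$ with the noise factor absorbed into $C$, whereas you separate out $(2\Theta-1)$ and phrase the geometric input as a lower bound on $(\la\otimes\la)(A(u,u^*))$. The content is identical.

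One point to flag: you derive part~(1) as a corollary of part~(2) by sending $\da_n\to 0$, which makes consistency contingent on finite VC dimension. The paper instead proves part~(1) separately, invoking a uniform-convergence-in-probability result from \cite{chambers2021recovering} rather than the explicit VC rate, and the remark immediately following the theorem singles out part~(2) as the statement that is only meaningful when $V<\infty$. Your shortcut is fine whenever $V$ is finite, but it does not establish part~(1) in the generality the paper intends.
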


Of course, the second statement in the theorem is only meaningful when the VC dimension of $\P$ is finite. The constants $K$ and $\bar C$ depend on the primitives in the environment, but not on preferences, utilities, or sample sizes.

\section{Recovering preferences and utilities}\label{sec:recoverpreferences}

The discussion in Section~\ref{sec:AAresults} focused on utility recovery, taking convergence of preferences as given. Here we take a step back, provide some conditions for preference recovery that are particularly relevant for the setting of Section~\ref{sec:AAresults}, and then connect these back to utility recovery in Corollary~\ref{cor:connection}. First we describe an experimental setting in which preferences may be elicited: an agent, or subject, faces a sequence of (incentivized) choice problems, and the choices made produce data on his preferences. The specific model and description below is borrowed from \cite{chambers2021recovering}, but the setting is completely standard in choice theory.

Let $X=\Delta([a,b])^S$ be the set of acts over monetary lotteries, as discussed in Section~\ref{sec:AAresults}.  A  \df{choice function} is a pair  $(\Sigma,c)$ with  $\Sigma\subseteq 2^X\setminus\{\os\}$  a collection of nonempty subsets of $X$, and $c:\Sigma\to 2^X$ with $\os\neq c(A)\subseteq A$ for all $A\in\Sigma$. When $\Sigma$, the domain of $c$, is implied, we refer to $c$ as a choice function.

A choice function $(\Sigma,c)$ is \df{generated} by a preference relation $\succeq$ over $X$ if \begin{equation*}
	c(A)= \{x\in A : x\succeq y \text{ for all } y\in B \},
\end{equation*}
for all $A \in \Sigma$.

The notation $(\Sigma,c_{\succeq})$ means that the choice function $(\Sigma,c_{\succeq})$ is generated by the preference relation $\succeq$ on $X$.

Our model features an experimenter (a female) and a subject (a male). The subject chooses among alternatives in a way described by a preference $\succeq^*$ over $X$, which we refer to as data-generating preference. The experimenter seeks to infer $\succeq^*$ from the subject's choices in a finite experiment.

In a finite experiment, the subject is presented with finitely many unordered pairs of alternatives $B_k=\{x_k,y_k\}$  in $X$. For every pair $B_k$, the subject is asked to choose one of the two alternatives: $x_k$ or $y_k$.

A \df{sequence of experiments} is a collection  $\Sigma_{\infty} = \{B_i\}_{i \in \Na}$ of pairs of possible choices presented to the subject. Let $\Sigma_k=\{B_1, \dots, B_k\}$ collect the first $k$ elements of a sequence of experiments, and  $B = \cup_{k=1}^\infty B_k$ be the set of all alternatives that are used over all the experiments in a sequence. Here $\Sigma_k$ is a finite experiment of \df{size} $k$.

We make two assumptions on $\Sigma_\infty$. The first is that $B$ is dense in $X$. The second is that, for any $x,y\in B$ there is $k$ for which $B_k=\{x,y\}$. The first assumption is obviously needed to obtain any general preference recovery result. The second assumption means that the experimenter is able to elicit the subject's choices over all pairs used in her experiment.\footnote{If there is a countable dense $A\subseteq X$, then one can always construct such a sequence of experiments via a standard diagonalization argument.}

For each $k$, the subject's preference $\succeq^*$ generates a choice function $(\Sigma_k,c)$ by letting, for each $B_i\in\Sigma_k$, $c(B)$ be a maximal element of $B_i$ according to $\succeq^*$. Thus the choice behavior observed by the experimenter is always consistent with $(\Sigma_k,c_{\succeq^*})$. 

We introduce two notions of rationalization: weak and strong.  A preference $\succeq_k$ \df{weakly rationalizes} $(\Sigma_k,c)$  if, for all $B_i\in \Sigma_k$,  $c(B_i)\subseteq c_{\succeq_k}(B_i)$. A preference $\succeq_k$ weakly rationalizes a choice sequence $(\Sigma_\infty,c)$ if it rationalizes the choice function of order $k$ $(\Sigma_k,c)$, for all $k \ge 1$.

A preference $\succeq_k$ \df{strongly rationalizes} $(\Sigma_k,c)$  if, for all $B_i\in \Sigma_k$,  $c(B_i)= c_{\succeq_k}(B_i)$. A preference $\succeq_k$ strongly rationalizes a choice sequence $(\Sigma_\infty,c)$ if it rationalizes the choice function of order $k$ $(\Sigma_k,c)$, for all $k \ge 1$.

In the history of revealed preference theory in consumer theory, strong rationalizability came first. It is essentially the notion in \citet{samuelson1938note} and \citet{richter}. Strong rationalizability is the appropriate notion when it is known that all potentially chosen alternatives are actually chosen, or when we want to impose, as an added discipline, that the observed choices are uniquely optimal in each choice problem. This makes sense when studying demand functions, as Samuelson did. Weak rationalizability was one of the innovations in \citet{afriat67}, who was interested in demand correspondences.\footnote{As an illustration of the difference between these two notions of rationalizability, note that, in the setting of consumer theory, one leads to the Strong Axiom of Revealed Preference while the other to the Generalized Axiom of Revealed Preference. Of course, Afriat's approach is also distinct in assuming a finite dataset. See \cite{chambers2016revealed} for a detailed discussion.}

\subsection{A general ``limiting'' result}

Our next result serves to contrast what can be achieved with the ``limiting'' (countably infinite) data with the limit of preferences recovered from finite choice experiments.

\begin{theorem}\label{thm:connected-v1}
 Suppose that $\succeq$ and $\succeq^*$ are two continuous preference relations (complete and transitive).
If $\succeq|_{B\times B} = \succeq^*|_{B\times B}$, then $\succeq = \succeq^*$.
\end{theorem}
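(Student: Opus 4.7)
The plan is to pass to continuous utility representations via Theorem~\ref{thm:levin} and then exploit the connectedness of $X$ to force the utilities of $\succeq$ and $\succeq^*$ to be related by a continuous strictly monotone transformation. Because $X = \Delta([a,b])^S$ is compact metrizable (hence locally compact Polish), Theorem~\ref{thm:levin} supplies continuous utility representations $u$ of $\succeq$ and $u^*$ of $\succeq^*$. The hypothesis $\succeq|_{B\times B} = \succeq^*|_{B\times B}$ yields a well-defined, strictly increasing map $\phi: u(B) \to u^*(B)$ given by $\phi(u(b)) = u^*(b)$, because $u(b_1) = u(b_2) \iff b_1 \sim b_2 \iff b_1 \sim^* b_2 \iff u^*(b_1) = u^*(b_2)$, and analogously for strict inequalities.

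Consider the continuous map $v := (u, u^*) : X \to \Re^2$. Since $X$ is compact and $B \subseteq X$ is dense, $v(X) = \overline{v(B)}$. Since $X$ is convex (and in particular connected), $v(X)$ is connected. The set $v(B)$ is precisely the graph of $\phi$ over $u(B)$.

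The crucial step is to show that $\phi$ has no jumps. Suppose, toward a contradiction, that $\phi$ jumps at some $t^* \in u(X)$: $\alpha := \sup\{\phi(s) : s < t^*,\, s \in u(B)\} < \inf\{\phi(s) : s > t^*,\, s \in u(B)\} =: \beta$. By the monotonicity of $\phi$, any point $(s, c) \in \overline{v(B)}$ with $s \ne t^*$ satisfies $c \leq \alpha$ if $s < t^*$ and $c \geq \beta$ if $s > t^*$; moreover the only points of $\overline{v(B)}$ on the vertical line $\{s = t^*\}$ are $(t^*, \alpha)$, $(t^*, \beta)$, and, if $t^* \in u(B)$, $(t^*, \phi(t^*))$. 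Choose $c \in (\alpha, \beta) \setminus \{\phi(t^*)\}$. Then the horizontal line $\{y = c\}$ avoids $\overline{v(B)}$ entirely, so $\overline{v(B)}$ decomposes into the nonempty disjoint closed subsets $\overline{v(B)} \cap \{y < c\}$ and $\overline{v(B)} \cap \{y > c\}$, contradicting the connectedness of $v(X)$.

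Consequently $\phi$ is continuous on $u(B)$ and extends uniquely to a continuous strictly increasing $\hat\phi: u(X) \to u^*(X)$. For each $x \in X$, taking $b_n \to x$ in $B$ gives $u^*(b_n) = \phi(u(b_n)) \to \hat\phi(u(x))$ by continuity of $\hat\phi$ and $u$, while $u^*(b_n) \to u^*(x)$ by continuity of $u^*$, so $u^* = \hat\phi \circ u$ on $X$. Since $\hat\phi$ is a strictly increasing continuous transformation, $u^*$ represents the same preference as $u$, giving $\succeq^* = \succeq$. The main obstacle is the disconnection argument at a putative jump of $\phi$; once continuity of $\phi$ is established, the rest of the proof is routine.
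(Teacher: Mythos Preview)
Your argument is correct and takes a genuinely different route from the paper's. The paper works directly with the two order relations: it first shows that $x\succ y$ forces $x\succeq^* y$ (via density of $B$ and continuity), then assumes $x\succ y$ together with $x\sim^* y$ and uses connectedness of $X$ to produce an intermediate ``interval'' $V=\{z:x\succ z\succ y\}$ and a pair $w,z\in V\cap B$ with $w\succ z$, leading to a contradiction with $\succeq|_{B\times B}=\succeq^*|_{B\times B}$. Your proof instead passes to continuous utilities $u,u^*$, encodes the hypothesis as a strictly increasing $\phi:u(B)\to u^*(B)$, and uses connectedness of $v(X)=(u,u^*)(X)$ in $\Re^2$ to rule out jumps of $\phi$; once $\phi$ extends continuously, $u^*=\hat\phi\circ u$ and the preferences coincide. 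The paper's approach is more elementary---it uses only continuity of the relations and connectedness of $X$, so it goes through for any connected topological space, as the paper notes after the statement---whereas yours invokes the existence of continuous utilities (Theorem~\ref{thm:levin}) and compactness of $X$ to get $v(X)=\overline{v(B)}$, so it needs the extra structure of $\Delta([a,b])^S$. On the other hand, your argument makes very transparent \emph{why} connectedness is the right hypothesis: it is exactly what prevents the induced order isomorphism from jumping.

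One small point to tighten: your definition of ``jump'' requires both one-sided sets $\{s<t^*\}\cap u(B)$ and $\{s>t^*\}\cap u(B)$ to be nonempty, so it only treats interior $t^*$. You should also rule out a half-jump at an endpoint $m\in u(B)$, i.e.\ $\phi(m)<\inf\{\phi(s):s>m\}$; the same disconnection argument (choose $c$ strictly between $\phi(m)$ and that infimum) handles it. Alternatively, once you know $u^*=\hat\phi\circ u$ on $u^{-1}(\mathrm{int}\,u(X))$, continuity of both sides extends the identity to all of $X$, which then forces $\phi(m)$ to agree with the one-sided limit.
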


Indeed, as the proof makes clear, Theorem~\ref{thm:connected-v1} would hold more generally for any $X$ which is a connected topological space, but it may not hold in absence of connectedness.   There is a sense in which the limiting case with an infinite amount of data offers no problems for preference recovery. The structure we impose is needed for the limit of rationalizations drawn from finite data.

\subsection{Recovery from finite data in the AA model}
Here we adopt the same structural assumptions as in Section~\ref{sec:AAresults}, namely that $X=\Delta([a,b])^S$, endowed with the weak topology and the first order stochastic dominance relation.  However, the result easily extends to  broader environments, as the proof makes clear.

\begin{theorem}\label{thm:weaklymon-v1} There is a sequence of finite experiments $\Sigma_\infty$ so that if the subject's preference $\succeq^*$ is continuous and weakly monotone, and for each $k\in\Na$,  $\succeq^k$ is a continuous and weakly monotone preference that strongly rationalizes a choice function $(\Sigma_k,c)$ generated by $\succeq^*$;  then $\succeq_k \rightarrow \succeq^*$.
\end{theorem}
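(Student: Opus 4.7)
The plan is to build $\Sigma_{\infty}$ from a countable set $B\subset X=\Delta([a,b])^S$ that is dense and enjoys a \emph{monotone density} property: for every $x\in X$ and every $\varepsilon>0$ there exist $u^+,u^-\in B$ within distance $\varepsilon$ of $x$ such that $u^+$ strictly dominates $x$ and $u^-$ is strictly dominated by $x$, with enough buffer that the domination persists under small perturbations of $x$. Concretely, take $B$ to consist of a countable dense subset $B_0$ together with statewise perturbations $(1-\eta)x_0+\eta\delta_b$ and $(1-\eta)x_0+\eta\delta_a$ for $x_0\in B_0$ and rational $\eta>0$, and enumerate all pairs from $B$ as $\Sigma_{\infty}$. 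The engine of the proof is that strong rationalization forces $\succeq^k$ to agree with $\succeq^*$ exactly on every pair in $\Sigma_k$ (strict-for-strict, indifferent-for-indifferent), and weak monotonicity of $\succeq^k$ then propagates the agreement to nearby points.

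For $\succeq^*\subseteq \Li(\succeq^k)$, I fix $(x,y)$ with $x\succeq^* y$ and a neighborhood $V$, and exhibit $(u,v)\in B^2\cap V$ with $u\succeq^* v$. When $x\succ^* y$, continuity of $\succeq^*$ and density of $B$ supply such a pair with strict preference. When $x\sim^* y$, monotone density produces $u\in B$ close to $x$ that strictly dominates $x$, and $v\in B$ close to $y$ that is strictly dominated by $y$; weak monotonicity of $\succeq^*$ then yields $u\succeq^* x\sim^* y\succeq^* v$, so $u\succeq^* v$. In either case, strong rationalization gives $u\succeq^k v$ as soon as $\{u,v\}\in\Sigma_k$.

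For $\Ls(\succeq^k)\subseteq\succeq^*$, I argue the contrapositive: given $v\succ^* u$, I pick $(u_0,v_0)\in B^2$ close to $(u,v)$ such that $u_0$ strictly dominates $u$, $v$ strictly dominates $v_0$, and $v_0\succ^* u_0$. Strong rationalization yields $v_0\succ^k u_0$ for large $k$. The buffers let me choose a neighborhood $V$ of $(u,v)$ such that every $(u',v')\in V$ has $u_0$ dominating $u'$ and $v'$ dominating $v_0$; weak monotonicity of $\succeq^k$ then chains $v'\succeq^k v_0\succ^k u_0\succeq^k u'$, forcing $v'\succ^k u'$ and hence $(u',v')\notin\succeq^k$.

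The main technical obstacle is ensuring that the strict FOSD comparisons in the monotone density property are robust in the weak topology on $\Delta([a,b])^S$, since pointwise CDF convergence may fail at discontinuity points of the limit measure. I would address this by working exclusively with the buffered perturbations above, which place a uniform fraction of mass at the endpoints $a$ or $b$ and so stabilize FOSD throughout a full neighborhood, and by handling boundary acts --- those with $x(s)\in\{\delta_a,\delta_b\}$ in some state $s$ --- either by perturbing only in the remaining states or by taking $u=x$ directly when $x$ already belongs to $B$.
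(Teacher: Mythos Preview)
Your overall architecture—establishing $\succeq^*\subseteq\Li(\succeq^k)$ and $\Ls(\succeq^k)\subseteq\succeq^*$ directly, using strong rationalization on $B$-pairs and weak monotonicity to propagate—is correct and is essentially the paper's strategy. The gap is in your construction of $B$ and the ``buffer'' claim. Mixing a weak-close $x_0\in B_0$ with $\delta_b$ does \emph{not} produce a measure that first-order stochastically dominates $x$, let alone a whole weak neighborhood of $x$: if $F_x(t)=0$ on some interval while $F_{x_0}(t)>0$ there---which weak closeness allows---then $(1-\eta)F_{x_0}(t)>0=F_x(t)$ for every $\eta<1$. Concretely, near $u=\delta_c$ every weak neighborhood contains both $\delta_{c+\epsilon}$ and $(1-\alpha)\delta_c+\alpha\delta_b$; any $u_0$ dominating all of the former must be supported on $(c,b]$, so it cannot arise as $(1-\eta)x_0+\eta\delta_b$ with $x_0$ merely dense-close to $\delta_c$, since such $x_0$ may carry mass below $c$. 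The same issue already bites in your $\Li$ step for the indifference case: you need $u\in B$ with $u\geq x$, not just $u$ weak-close to $x$, and your endpoint perturbation does not deliver this from an arbitrary dense $B_0$.

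The paper handles the difficulty by a different mechanism. It first uses compactness of the closed-convergence topology to pass to a convergent subsubsequence, and then, instead of seeking a fixed $u_0$ that dominates an entire neighborhood, proves a \emph{squeezing lemma} (Lemma~\ref{lem:squeeze}): any convergent sequence $x_n\to x^*$ in $\Delta([a,b])$ is trapped between an increasing sequence $x'_n$ and a decreasing sequence $x''_n$, both converging to $x^*$, obtained as $\inf_{m\geq n}x_m$ and $\sup_{m\geq n}x_m$ in the FOSD lattice. One then fixes a single index $N$ with $(x'_N,y'_N)$ inside the strict-preference region, pulls a $B$-pair below/above it via the \emph{countable order property} (which does require a specific $B$---the paper takes finitely supported rational measures, not an arbitrary dense set), and chains $x_n\geq x'_n\geq x'_N\succ_n y'_N\geq y'_n\geq y_n$ for all large $n$. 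Your robust-buffer route can in fact be salvaged---for instance, $F_{\tilde u_0}(t)=\max(0,F_u(t-\epsilon)-\epsilon)$ dominates the entire L\'evy $\epsilon$-ball around $u$, and one can then pick $u_0\in B$ with $u_0\geq\tilde u_0$---but this again needs $B$ to satisfy the countable order property, which is strictly more than density plus endpoint mixtures.
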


\begin{corollary}\label{cor:connection}
Let $\succeq^*$ and $\succeq^k$ be as in the statement of Theorem~\ref{thm:weaklymon-v1}. If, in addition,  $\succeq^*$ and $\succeq^k$ have standard representations $(V,u)$ and $(V^k,u^k)$ then $(V,u)=\lim_{k\to\infty} (V^k,u^k)$.
\end{corollary}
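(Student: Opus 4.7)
The proof is essentially a direct concatenation of the two prior results, so the plan is short. First I would verify that the hypotheses of Theorem~\ref{thm:weaklymon-v1} are met: by assumption, each $\succeq^k$ is continuous, weakly monotone, and strongly rationalizes the finite choice function $(\Sigma_k,c)$ generated by the continuous, weakly monotone preference $\succeq^*$; moreover the sequence of experiments $\Sigma_\infty$ can be chosen as the one whose existence is asserted by the theorem. Applying Theorem~\ref{thm:weaklymon-v1} therefore yields $\succeq^k \to \succeq^*$ in the topology of closed convergence on $\Delta([a,b])^S$.

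Next I would check that the hypotheses of Theorem~\ref{thm:convergenceutilities}(1) are met. By definition, a standard preference is a weakly monotone preference admitting a standard representation $(V,u)$. Since we are told $\succeq^*$ and each $\succeq^k$ have standard representations $(V,u)$ and $(V^k,u^k)$ respectively, and we have just established weak monotonicity and $\succeq^k\to\succeq^*$, all of the hypotheses of part~(1) of Theorem~\ref{thm:convergenceutilities} hold. Invoking that theorem directly, we conclude $(V^k,u^k)\to (V,u)$ in the relevant product topology (i.e., $V^k\to V$ in the compact-open topology on $\Delta([a,b])^S$ and $u^k\to u$ in the compact-open topology on $[a,b]$), which is exactly the statement of Corollary~\ref{cor:connection}.

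There is no substantive mathematical obstacle; the work lies entirely in lining up the hypotheses. The only thing to be careful about is the definitional matching: in Theorem~\ref{thm:weaklymon-v1} the ambient space is $\Delta([a,b])^S$ with the weak topology, which is the same ambient space used in Section~\ref{sec:AAresults} for Theorem~\ref{thm:convergenceutilities}, so no translation between settings is needed. Hence the corollary follows by chaining the two theorems, with no additional estimates required.
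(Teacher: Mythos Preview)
Your proposal is correct and matches the paper's intended argument: the corollary is stated immediately after Theorem~\ref{thm:weaklymon-v1} without a separate proof precisely because it follows by chaining Theorem~\ref{thm:weaklymon-v1} (to obtain $\succeq^k\to\succeq^*$) with Theorem~\ref{thm:convergenceutilities}(1) (to obtain $(V^k,u^k)\to(V,u)$). Your verification that the hypotheses line up---in particular that weak monotonicity plus the assumed standard representations make each $\succeq^k$ and $\succeq^*$ a standard preference---is exactly the bookkeeping needed.
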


Note that Theorem~\ref{thm:weaklymon-v1} requires the existence of the data-generating preference $\succeq^*$.   
 
A ``dual'' result to Theorem~\ref{thm:weaklymon-v1} was established in \citet{chambers2021recovering}.  There, the focus was on weak rationalization via $\succeq^k$, which is a weaker notion than the strong rationalization hypothesized here.  To achieve a weak rationalization result, we assumed instead that preferences were strictly monotone.

\section{Proofs}\label{sec:proofs}

In this section, unless we say otherwise, we denote by $X$ the set of acts $\Delta([a,b])^S$, and the elements of $X$ by $x,y,z$ etc. Note that $X$ is compact Polish when $\Delta([a,b])$ is endowed with the topology of weak convergence of probability measures. Let $\P$ be the set of all complete and continuous binary relations on $X$.

\subsection{Lemmas}

The lemmas stated here will be used in the proofs of our results. 

\begin{lemma}\label{lem:supsinfsRn}
        Let $X\subseteq\Re^n$. If $\{x'_n\}$ is an increasing sequence in $X$, and $\{x''_n\}$ is a decreasing sequence, such that $\sup\{x'_n:n\geq 1\} = x^* = \inf\{x''_n:n\geq 1\}$. Then \[\lim_{n\rightarrow \infty} x'_n = x^* = \lim_{n\rightarrow \infty} x''_n.\]
\end{lemma}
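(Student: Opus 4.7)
The plan is to reduce the statement to the one-dimensional monotone convergence theorem by working coordinatewise. The order on $\Re^n$ implicit in the phrases ``increasing,'' ``decreasing,'' ``sup'' and ``inf'' is the product (coordinatewise) order, so my first step will be to note that for each coordinate $i\in\{1,\ldots,n\}$, the real sequence $\{(x'_k)_i\}_k$ is nondecreasing and $\{(x''_k)_i\}_k$ is nonincreasing.

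Next, I would argue that the coordinate projection commutes with the sup and inf. Since every $x'_k$ is dominated by every $x''_m$ (using $x'_k \leq x^* \leq x''_m$), each real sequence $\{(x'_k)_i\}_k$ is bounded above by $(x''_1)_i$, and $(x^*)_i = \sup_k (x'_k)_i = \inf_k (x''_k)_i$; this is immediate from the fact that a coordinatewise supremum, when it exists in the product order, must be the vector of coordinate suprema.

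Now the standard monotone convergence theorem for real sequences gives $(x'_k)_i \to (x^*)_i$ and $(x''_k)_i \to (x^*)_i$ for every $i$. Since the topology on $\Re^n$ coincides with the product topology, coordinatewise convergence implies convergence in $\Re^n$, so $x'_k \to x^*$ and $x''_k \to x^*$, yielding the lemma.

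There is no serious obstacle here; the one mildly technical point is to be explicit that ``increasing/decreasing'' and ``$\sup$/$\inf$'' are understood with respect to the coordinatewise order on $\Re^n$, and that with this interpretation projections commute with these order operations. Everything else is the classical monotone convergence theorem applied $n$ times.
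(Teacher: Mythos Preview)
Your proposal is correct and follows exactly the same approach as the paper: the paper's proof simply observes that the one-dimensional case is obvious and that for $n>1$ convergence, sups, and infs are computed componentwise, so the result follows. Your write-up is just a more detailed elaboration of this same coordinatewise reduction to the real monotone convergence theorem.
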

\begin{proof}
        This is obviously true for $n=1$. For $n>1$, convergence and sups and infs are obtained component-by-component, so the result follows.
\end{proof}

\begin{lemma}\label{lem:squeeze}
	Let $X = \Delta([a,b])$. Let $\{x_n\}$ be a convergent sequence in $X$, with $x_n\rightarrow x^*$. Then there is an increasing sequence $\{x'_n\}$ and an a decreasing sequence $\{x''_n\}$ such that $x'_n\leq x_n\leq x''_n$, and $\lim_{n\rightarrow \infty} x'_n = x^* = \lim_{n\rightarrow \infty} x''_n$.
\end{lemma}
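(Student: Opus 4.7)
The plan is to apply Skorohod's representation theorem to turn weak convergence into almost-sure convergence of random variables on a common probability space, and then construct the desired monotone sequences by taking pointwise infima and suprema over tails.

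Since $[a,b]$ is Polish and $x_n\to x^*$ weakly in $\Delta([a,b])$, Skorohod's representation yields a probability space carrying $[a,b]$-valued random variables $Y_n$ and $Y$ with $Y_n\sim x_n$, $Y\sim x^*$, and $Y_n\to Y$ almost surely. Define
\[
Y'_n := \inf_{k\geq n} Y_k \qquad\text{and}\qquad Y''_n := \sup_{k\geq n} Y_k,
\]
which are measurable and $[a,b]$-valued. By construction $Y'_n\leq Y_n\leq Y''_n$ pointwise, $(Y'_n)_n$ is nondecreasing, and $(Y''_n)_n$ is nonincreasing. Moreover $Y'_n\uparrow \liminf_k Y_k$ and $Y''_n\downarrow \limsup_k Y_k$, and both limits equal $Y$ almost surely because $Y_k\to Y$ a.s.

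Let $x'_n$ and $x''_n$ denote the laws of $Y'_n$ and $Y''_n$. I will verify the three required properties for $x'_n$; the argument for $x''_n$ is symmetric. First, $Y'_n\leq Y_n$ a.s.\ gives $\E[f(Y'_n)]\leq \E[f(Y_n)]$ for every bounded continuous nondecreasing $f:[a,b]\to\R$, which is exactly $x'_n\leq x_n$ in first-order stochastic dominance. Second, the pointwise inequality $Y'_n\leq Y'_{n+1}$ yields, by the same comparison of integrals, $x'_n\leq x'_{n+1}$ in FOSD, so $(x'_n)$ is an increasing sequence. Third, $Y'_n\to Y$ a.s.\ together with the uniform bound $|Y'_n|\leq \max(|a|,|b|)$ yields $\E[f(Y'_n)]\to \E[f(Y)]$ for every bounded continuous $f$ by dominated convergence, so $x'_n\to x^*$ weakly.

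No step is particularly delicate. The main ingredient is Skorohod's representation; everything else reduces to elementary properties of pointwise $\liminf$ and $\limsup$, and to the equivalence between FOSD and inequalities of integrals against bounded continuous nondecreasing functions.
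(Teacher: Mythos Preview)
Your proof is correct and takes a genuinely different route from the paper's. The paper works intrinsically with the order structure: it invokes the fact that $\Delta([a,b])$ under first-order stochastic dominance is a complete lattice, defines $x'_n$ and $x''_n$ as the lattice infimum and supremum of the tail $\{x_m:m\ge n\}$, and then verifies convergence by analyzing the associated CDFs at continuity points of $F_{x^*}$ (with a right-continuous modification needed for $F_{x'_n}$). You instead couple everything on a single probability space via Skorohod and reduce the problem to pointwise $\liminf$/$\limsup$ of real-valued random variables, after which FOSD and weak convergence follow from elementary monotonicity and dominated convergence. Your construction need not reproduce the lattice infimum and supremum---the law of $\inf_{k\ge n}Y_k$ depends on the coupling---but the lemma does not require that, only \emph{some} monotone squeezing sequences. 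The trade-off: the paper's argument is self-contained once one accepts the lattice result, while yours imports Skorohod but then avoids the CDF bookkeeping and the right-continuous modification entirely, which makes it arguably cleaner.
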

\begin{proof}
	The set $X$ ordered by first order stochastic dominance is a complete lattice (see, for example, Lemma 3.1 in \cite{kertz2000}). Suppose that $x_n\rightarrow x^*$. Define $x'_n$ and $x''_n$ by
	$x'_n = \inf\{x_m : n\leq m \}$ and $x''_n = \sup\{x_m : n\leq m \}$. Clearly, $\{x'_n\}$ is an increasing sequence, $\{x''_n\}$ is decreasing, and $x'_n\leq x_n\leq x''_n$.

	Let $F_x$ denote the cdf associated with $x$. Note that  $F_{x''_n} (r)  = \inf \{F_{x_m}(r) : n\leq m \}$ while $F_{x'_n} (r)$ is the right-continuous modification of $\sup \{F_{x_m}(r) : n\leq m \}$.
	For any point of continuity $r$ of $F$, $F_{x_m}(r) \rightarrow F_{x^*}(r)$, so \[
	F_x(r) = \sup \{ \inf \{F_{x_m}(r) : n\leq m \} : n\geq 1\} \] by Lemma~\ref{lem:supsinfsRn}.

	Moreover, $F_{x^*}(r) = \inf \{ \sup \{F_{x_m}(r) : n\leq m \} : n\geq 1\}$. Let $\ep>0$. Then
	\[\begin{split}
	F_{x^*} (r-\ep) \leftarrow  \sup \{F_{x_m}(r-\ep) : n\leq m \} \leq
	F_{x'_n} (r)  \leq  \sup \{F_{x_m}(r+\ep) : n\leq m \} \\ \rightarrow F_{x^*}(r+\ep)
	\end{split}	\] Then $F_{x'_n} (r) \rightarrow F_{x^*}(r)$, as $r$ is a point of continuity of $F_{x^*}$.
\end{proof}

The results we have obtained motivate two definitions that will prove useful.  Say that the set $X$, together with the collection of finite experiments $\Sigma_\infty$, has the \df{countable order property} if for each $x\in X$ and each neighborhood $V$ of $x$ in $X$ there is $x',x''\in (\cup_i B_i) \cap V$ with $x'\leq x\leq x''$.
We say that $X$ has the \df{squeezing property} if for any convergent sequence $\{x_n\}_n$ in $X$, if $x_n\rightarrow x^*$ then there is an increasing sequence $\{x'_n\}_n$, and an a decreasing sequence $\{x''_n\}_n$, such that $x'_n\leq x_n\leq x''_n$, and  $\lim_{n\rightarrow \infty} x'_n = x^* = \lim_{n\rightarrow \infty} x''_n$.

\begin{lemma}\label{lem:AAsqueezeandctbleorder} If $X=\Delta([a,b])^S$, then $X$ has the squeezing property, and there is $\Sigma_\infty$ such that $(X,\Sigma_\infty)$ has the countable order property.
 \end{lemma}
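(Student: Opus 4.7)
The plan is to handle the two claims separately. The squeezing property for $X=\Delta([a,b])^S$ reduces directly to Lemma~\ref{lem:squeeze}: both the product weak* topology and the order of statewise first-order stochastic dominance decouple across the finite set $S$, so any $x_n\to x^*$ in $X$ means $x_n(s)\to x^*(s)$ in $\Delta([a,b])$ for each $s\in S$. Applying Lemma~\ref{lem:squeeze} coordinate-by-coordinate produces increasing $x'_n(s)\uparrow x^*(s)$ and decreasing $x''_n(s)\downarrow x^*(s)$ with $x'_n(s)\leq x_n(s)\leq x''_n(s)$; assembling coordinates gives $x'_n,x''_n\in X$ with the required sandwiching and with $x'_n\to x^*\leftarrow x''_n$ (since component-by-component weak convergence is exactly convergence in the product).

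For the countable order property, the plan is to build a countable $B\subseteq X$ that is \emph{order-dense on both sides} at every point: for every $x\in X$ and every open neighborhood $V$ of $x$, there are $b',b''\in B\cap V$ with $b'\leq x\leq b''$. Once such a $B$ is in hand, take $\Sigma_\infty$ to be any enumeration of the (countably many) two-element subsets of $B$; this automatically makes $\cup_i B_i=B$ and satisfies the pairing and density conditions required of a sequence of experiments. Because $S$ is finite and both the order and the topology on $X$ are product structures, it suffices to build such a set $B_0\subseteq \Delta([a,b])$ and set $B=B_0^S$: given a product neighborhood $V=\prod_s V_s$ of $x=(x(s))_{s\in S}$, approximating each $x(s)$ from above (resp.\ below) within $V_s$ by an element of $B_0$ and collecting the coordinates produces the desired $b''\geq x$ (resp.\ $b'\leq x$) in $B\cap V$.

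The one-dimensional construction is the crux. Take $B_0$ to consist of all discrete probability measures with finitely many atoms at rational points of $[a,b]$ (including $a$ and $b$) and rational weights summing to one. To approximate $p\in\Delta([a,b])$ from above in a prescribed weak* neighborhood $U$, first fix a rational partition $a=a_0<a_1<\cdots<a_n=b$ of sufficiently fine mesh and define the intermediate measure $p^+=\sum_{j=1}^n p_j\delta_{a_j}$ with $p_j=p([a_{j-1},a_j))$ for $j<n$ and $p_n=p([a_{n-1},b])$. Pushing each interval's mass to its right endpoint gives $p^+\geq p$ in first-order stochastic dominance, and weak* closeness to $p$ is controlled by the mesh. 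Next, perturb weights to rationals: pick rational $\tilde p_j\in[0,p_j]$ very close to $p_j$ for $j<n$, and set $\tilde p_n=1-\sum_{j<n}\tilde p_j\geq p_n$, also rational. The inequalities $\sum_{j\leq k}\tilde p_j\leq\sum_{j\leq k}p_j$ for $k<n$ (with equality at $k=n$) translate exactly into $q^+:=\sum_j\tilde p_j\delta_{a_j}\geq p^+\geq p$, while closeness of $\tilde p_j$ to $p_j$ yields $q^+\in U$. The dominated approximation $q^-\leq p$ in $B_0\cap U$ is symmetric, pushing mass to left endpoints and rounding the first $n-1$ weights upward.

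The main obstacle is exactly this weight-rounding step: an arbitrary rational approximation of the $p_j$ can destroy the inequality $q^+\geq p^+$, so one must round consistently in one direction (down on the lower weights, letting the top weight absorb the residue) to keep the cumulative distribution functions ordered the correct way. Everything else is bookkeeping using that $S$ is finite and the order and topology are both product structures.
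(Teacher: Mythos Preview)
Your proposal is correct and follows essentially the same route as the paper: the squeezing property is reduced to Lemma~\ref{lem:squeeze} coordinate-by-coordinate over the finite set $S$, and the countable order property is obtained from the same countable set $B_0$ of finitely-supported measures with rational atoms and rational weights (then $B=B_0^S$). The only difference is that the paper outsources the order-density of $B_0$ to Theorem~15.11 of Aliprantis--Border, whereas you supply a direct push-mass-and-round-weights construction; your version is more self-contained and makes explicit the product reduction over $S$ that the paper leaves implicit.
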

 
\begin{proof} The squeezing property follows from Lemma~\ref{lem:squeeze}, and the countable order property from  Theorem 15.11 of \citet{aliprantis2006infinite}: Indeed, 
 let $B$ be the set of probability distributions $p$ with finite support on $\Qe\cap [a,b]$, where for all $q\in\Qe\cap [a,b]$, $p(q)\in\Qe$.  Then we may choose a sequence of pairs $B_i$, and let $\Sigma_{\infty}$ to be $B_i$ with $B=\cup B_i$ so that the countable order property is satisfied. 
 \end{proof}

\subsection{Proof of Theorem~\ref{thm:convergenceutilities}}
Without loss of generality, we may set $[a,b]=[0,1]$. First we show that $u^k\to u$ in the compact-open topology. To this end, let $x^k\to x$. We want to show that $u^k(x^k)\to u(x)$. Suppose then that this is not the case, and by selecting a subsequence that $u^k(x^k)\to Y>u(x)$ (without loss). Note that $\da_{x^k}\sim^k p^k$,  where $p^k$ is the lottery that pays $1$ with probability  $u^k(x^k)\in [0,1]$, and $0$ with probability  $1-u^k(x^k)$. Let $p$ be the lottery that pays $1$ with probability $Y$, and $0$ with probability  $1-Y$ (given that the range of $u^k$ is $[0,1]$, we must have $Y\in [0,1]$). Now we have that $(\da_{x^k},p^k)\to (\da_x, p)$ and $\da_{x^k}\sim^k p^k$ implies $\da_x\sim p$. This is  a contradiction because $\da_x$ is indifferent in $\succeq$ to the lottery that pays $1$ with probability  $u^k(x^k)\in [0,1]$, and $0$ with probability  $1-u^k(x^k)$. The latter is strictly first-order stochastically dominated by the lottery $p$.  

To finish the proof, we show that $V^k\to V$. This is the same as proving that  $V^k(f^k)\to V(f)$ when $f^k\to f$. For each $k$, continuity and weak monotonicity imply that there is $x^k\in [0,1]$ so that \[
V^k(f^k) = V^k(\da_{x^k},\ldots, \da_{x^k}) = u^k(x^k).
\] Similarly, there is $x$ with $V(f) = V(\da_x,\ldots,\da_x)=u(x)$.

Now we argue that $x^k\to x$. Indeed $\{x^k\}$ is a sequence in $[0,1]$. If there is a subsequence that converges to, say, $x'> x$ then we may choose $x''=\frac{x+x'}{2}$ and eventually
\[ f^k \succeq^k (\da_{x''},\ldots,\da_{x''}) \succ (\da_{x},\ldots,\da_{x})\sim  f, \] using weak monotonicity. This is impossible because $(f^k,(\da_{x^k},\ldots,\da_{x^k})\to (f,(\da_{x'},\ldots,\da_{x'}))$ and $f^k\succeq^k ((\da_{x^k},\ldots,\da_{x^k})$ imply that  $f\succeq ((\da_{x'},\ldots,\da_{x'}) \succeq (\da_{x''},\ldots,\da_{x''})$.

Finally, using what we know about the convergence of $u^k$ to $u$, $V^k(f^k) = u^k(x^k)\to u(x)=V(f)$.

We now turn to the second statement in the theorem. Observe that $H^k$ is a continuous function from $[0,1]^S$ onto $[0,1]$.  Let $z^k \in [0,1]^S$ be an arbitrary convergent sequence, and say that $z^k \rightarrow z^*$.  We claim that $H^k(z^k)\rightarrow H(z^*)$.  Without loss we may assume that $H^k(z^k) \rightarrow Y$, by taking a subsequence if necessary.  For each $k$ and $s$, choose $y^k(s)\in [0,1]$ for which $u^k(y^k(s))=z^k(s)$. Again, without loss, we may assume that $y^k\rightarrow y^*$ by taking a subsequence if necessary, and using the finiteness of $S$.  Observe also that $u(y^*(s))=z^*(s)$ as we have shown that $u^k\to u$ in the compact-open topology.  

Now, we may also choose $\hat z^k \in [0,1]$ so that \[u^k(\hat z^k)=H^k(z^k)=H^k((u^k(y^k(s)))_{s\in S}),\] and further may again without loss (by taking a subsequence) assume that $\hat z^k$ converges to $\hat z^*$. Thus $u(\hat z^*)=\lim u^k(\hat z^k)= \lim H^k(z^k)=Y$, again using what we have shown regarding $u^k\to u$.  Then $(\delta_{\hat z^k},\ldots,\delta_{\hat z^k})\sim^k (y^k(s))_{s\in S}$ so that, by taking limits, $(\delta_{\hat z^*},\ldots,\delta_{\hat z^*})\sim^* (y^*(s))_s$.  This implies that $Y=u(\hat z^*)=H(u(y^*(s))=H(z^*)$.  

\subsection{Proof of Proposition~\ref{prop:certaintyequiv}}

Take $(\succeq^k,p^k)$ as in the statement of the Proposition, and observe that for every $p\in\Delta([a,b])$, $\mbox{ce}(\succeq^k,p^k)\in[a,b]$.  Suppose by means of contradiction that $\mbox{ce}(\succeq^k,p^k)\rightarrow \mbox{ce}(\succeq,p)$ is false, then there is some $\epsilon > 0$ and a subsequence for which $|\mbox{ce}(\succeq^{k_m},p^{k_m})-\mbox{ce}(\succeq,p)|>\epsilon$, by taking a further subsequence, we assume without loss that $\mbox{ce}(\succeq^{k_m},p^{k_m})\rightarrow \alpha\neq\mbox{ce}(\succeq,p)$.  Now, $p^{k_m}\sim^{k_m} \delta_{\mbox{ce}(\succeq^{k_m},p^{k_m})}$, and $p^{k_m}\rightarrow p$ and $\delta_{\mbox{ce}(\succeq^{k_m},p^{k_m})}\rightarrow \delta_{\alpha}$.  So by definition of closed convergence, it follows that $p\sim \delta_{\alpha}$; but this violates certainty monotonicity as $\alpha \neq \mbox{ce}(\succeq,p)$.

\section{Proof of Theorem~\ref{thm:noisychoice}}\label{sec:pfnoisythm}
First some notation. Let $\mu_n(\succeq) = \frac{1}{n}\sum_{i=1}^n \one_{x_i \succeq y_i}$, and
$\succeq_n\in \P$ be represented by $u_n\in \U$. By definition of $u_n$, we have
that $\mu_n(\succeq_n)\geq \mu_n(\succeq)$ for all $\succeq\in \P$. And we use
$\Vol(A)$ to denote the \df{volume} of a set $A$ in $\Re^d$, when this is well
defined (see \cite{schneider2014convex}). 

Consider the measure $\mu$ on $X\times X$ defined as \[
\mu(A,\succeq) = \int_A q(\succeq; x,y) \diff \lambda (x,y).
\]

In particular
\[
\mu(\succeq',\succeq) = \int_{X\times X} \one_{\succeq'}(x,y) q(\succeq; x,y) \diff \lambda (x,y).
\] is the probability that a choice with error made at a randomly-drawn choice
problem by an agent with preference $\succeq$ will coincide with $\succeq'$.

The key identification result shown in \cite{chambers2021recovering} is that, if $\succeq'\neq \succeq$, then
\[\mu(\succeq',\succeq) < \mu(\succeq,\succeq).\]

\begin{lemma}\label{lem:Lipsch}  Consider a Lipschitz noise choice environment
  $(X,\P,\la,q)$.  There is a constant $C$ with the following property. If $\succeq$ and $\succeq'$ are two preferences in $\P$ with
  representations $u$ and $u'$ (respectively) in $\U$. Then \[
  C \rho(u,u')^d \leq   \mu(\succeq,\succeq) - \mu(\succeq',\succeq) 
  \]
  \end{lemma}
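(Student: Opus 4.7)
My plan has two stages: first I would reduce the likelihood gap $\mu(\succeq,\succeq)-\mu(\succeq',\succeq)$ to the $\la\otimes\la$-mass of the strict disagreement region, and then plant an explicit geometric subset of that region using the Wald and Lipschitz hypotheses.

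For the reduction I would write
\[
\mu(\succeq,\succeq)-\mu(\succeq',\succeq) \;=\; \int_{X\times X}\bigl(\one_{x\succeq y}-\one_{x\succeq' y}\bigr)\,q(\succeq;x,y)\,d(\la\otimes\la)(x,y).
\]
The integrand equals $+1$ on $A:=\{x\succeq y,\,y\succ' x\}$ and $-1$ on $B:=\{y\succ x,\,x\succeq' y\}$. Because $\la\otimes\la$ is invariant under the swap $(x,y)\mapsto(y,x)$, because $q(\succeq;x,y)+q(\succeq;y,x)=1$, and because the indifference sets $\{x\sim y\}$ and $\{x\sim' y\}$ are $\la\otimes\la$-null (Lipschitz level sets on $\Re^d$ have $(d-1)$-dimensional Hausdorff measure and $\la$ is absolutely continuous), the swap identifies $B$ with $A$ modulo null sets and converts $\int_B q\,d(\la\otimes\la)$ into $\int_A(1-q)\,d(\la\otimes\la)$. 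The difference collapses to $\int_A(2q-1)\,d(\la\otimes\la)$; restricting to $A_1:=\{x\succ y,\,y\succ' x\}$, where $2q-1\ge 2\Theta-1>0$ while $2q-1=0$ on the residue $A\setminus A_1$, yields
\[
\mu(\succeq,\succeq)-\mu(\succeq',\succeq) \;\ge\; (2\Theta-1)\,\la\otimes\la(A_1).
\]

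For the geometric step I would pick, by the defining property of $\rho$, a point $x^*\in X$ with $|u(x^*)-u'(x^*)|\ge r:=\rho(u,u')$, and assume without loss that $u(x^*)-u'(x^*)\ge r$. Set $\bar t=\tfrac12(u(x^*)+u'(x^*))$ and $y^* = \bar t\,\one$; the Wald convention places $y^*\in X$ with $u(y^*)=u'(y^*)=\bar t$, so $(x^*,y^*)\in A_1$ with slack at least $r/2$ in each defining inequality. Taking $R=r/(8\kappa)$, the common Lipschitz bound gives $u(x)-u(y)\ge r/4$ and $u'(y)-u'(x)\ge r/4$ for every $(x,y)\in B(x^*,R)\times B(y^*,R)$, so the whole product of balls is contained in $A_1$. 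Using $\la\ge c\,\mathrm{Leb}$ and shifting $(x^*,y^*)$ slightly inside $\mathrm{int}(X)$ if needed---permissible since $\{|u-u'|\ge r/2\}$ is open and meets $\mathrm{int}(X)$---produces a bound of the form $\la\otimes\la(A_1)\ge C'(r/\kappa)^{2d}$.

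The main obstacle I anticipate is matching the stated exponent $d$ in $\rho^d$ rather than the $2d$ that the direct product-of-balls construction naturally delivers. To gain the missing factor of $r^d$, I would amortize the $y$-marginal against the diagonal: replace the $y$-ball by the tube $\{t\,\one+v : t\in(u'(x^*)+r/4,\,u(x^*)-r/4),\ \|v\|<r/(8\kappa)\}$, whose longitudinal extent along the one-dimensional diagonal is of order $r$ and whose transverse radius is of order $r/\kappa$; a coarea-style argument, exploiting that $u(t\,\one)=u'(t\,\one)=t$ and that Lipschitz level sets on $X$ have uniformly bounded $(d-1)$-dimensional Hausdorff measure, then estimates the set of admissible $y$ for each $x\in B(x^*,R)$ as having $d$-volume of order $r\cdot(r/\kappa)^{d-1}$, which when multiplied by the $x$-ball volume $(r/\kappa)^d$ and re-amortized over the diagonal direction recovers the stated $C\rho^d$. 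Routine boundary handling and aggregation of $c^2$, $\omega_d$, $\kappa^{-(2d-1)}$, $8^{-2d}$ and $(2\Theta-1)$ into the final constant $C$ complete the argument.
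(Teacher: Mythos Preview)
Your two-stage plan matches the paper's argument closely. The reduction step---rewriting the likelihood gap as $\int_{A_1}(2q-1)\,d(\la\otimes\la)\ge(2\Theta-1)\,\la\otimes\la(A_1)$ on the strict disagreement set---is exactly the identity the paper invokes (citing \cite{chambers2021recovering}). For the geometric step the paper does almost what you do: it fixes $x$ with $u(x)-u'(x)=\Delta$, sets $\ep=\Delta/(4\kappa)$, and pairs the ball $B_{\ep/2}(x)$ not with a second ball but with the \emph{order interval} $I=[(u'(x)+\kappa\ep)\one,(u(x)-\kappa\ep)\one]$, a cube of side $\Delta/2$. The Lipschitz bound and monotonicity place $B_{\ep/2}(x)\times I$ inside the disagreement set, and since $\Vol(I)=(\Delta/2)^d$ while $\la(B_{\ep/2}(x))\gtrsim(\Delta/\kappa)^d$, the paper arrives at $C\Delta^{2d}$.

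So your instinct about the ``main obstacle'' is correct: the product-of-regions construction delivers $\rho^{2d}$, not $\rho^d$. In fact the paper's own proof concludes with the line $\mu(\succeq,\succeq)-\mu(\succeq',\succeq)\ge C\Delta^{2d}$; the exponent $d$ in the lemma \emph{statement} appears to be a typo for $2d$ (compare the companion homothetic lemma, which is stated with $2d$ and proved by the same mechanism). You should therefore drop the attempted upgrade to $\rho^d$: your tube has longitudinal extent $\sim r$ and transverse radius $\sim r/\kappa$, hence $d$-volume $\sim r\cdot(r/\kappa)^{d-1}$, which is the same order $r^d$ as the ball it replaces; multiplying by the $x$-ball still gives $r^{2d}$, and ``re-amortizing over the diagonal'' cannot change the exponent. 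No coarea argument rescues this, because one can build Lipschitz Wald pairs $u,u'$ whose discrepancy of size $\rho$ is confined to a spot of diameter $\sim\rho/\kappa$, forcing $\la\otimes\la(A_1)$ down to order $\rho^{2d}$.
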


\begin{proof} The ball in $\Re^d$ with center $x$ and radius $\ep$ is denoted by
  $B_\ep(x)$. First we show that the map \[\ep\mapsto \frac{\Vol(B_{\epsilon}(x)\cap X)}{\Vol(B_{\epsilon}(x))},\] defined for $x\in X$, is  nonincreasing as a function of $\epsilon>0$.

  Indeed, let $\epsilon_1 < \epsilon_2$, and  let $y\in B_{\epsilon_2}(x)\cap X$.  Then $y\in
X$ and $\|y-x\|\leq \epsilon_2$.  By convexity of $X$, $y_1\equiv
x+\frac{\epsilon_1}{\epsilon_2}(y-x)=(1-\frac{\epsilon_1}{\epsilon_2})x +
\frac{\epsilon_1}{\epsilon_2}y\in X$, and $y_1\in B_{\epsilon_1}(x)$.   Observe
further by properties of Lebesgue measure in $\Re^d$ that
$\Vol(\{x+\frac{\epsilon_1}{\epsilon_2}(y-x):y\in B_{\epsilon_2}(x)\cap
X\})=\left(\frac{\epsilon_1}{\epsilon_2}\right)^d \Vol
(B_{\epsilon_2}(x)\cap X)$.  Therefore, $\Vol(B_{\epsilon_1}(x)\cap X)\geq
\left(\frac{\epsilon_1}{\epsilon_2}\right)^d \Vol (B_{\epsilon_2}(x)\cap
X)$.  Since
$\Vol(B_{\epsilon_1}(x))=\left(\frac{\epsilon_1}{\epsilon_2}\right)^d
\Vol(B_{\epsilon_2}(x))$, it follows that\[
\frac{\Vol(B_{\epsilon_1}(x)\cap X)}{\Vol(B_{\epsilon_1}(x))}\geq
\frac{\Vol
  (B_{\epsilon_2}(x)\cap X)}{\Vol(B_{\epsilon_2}(x))},\] like we wanted to
show.

Now observe that there exists $\bar\ep >0$ large enough that $X\subseteq B_{\ep}(x)$
for all $\ep\geq \bar\ep$ and $x\in X$. Hence, for any $x\in X$ and $\ep\in(0,\bar\ep]$
\[
\frac{\Vol(B_{\epsilon}(x)\cap X)}{\Vol(B_{\epsilon}(x))}\geq
\frac{\Vol(X)}{\Vol(B_{\bar \epsilon}(x))}\equiv c' > 0, \] as $X$ has
nonempty interior and the volume of a ball in $\Re^d$ is independent of its center.

Now we proceed with the proof of the statement in the lemma. Let $\Delta =
\rho(u,u')$ and fix $x\in X$ with (wlog) 
$u(x)-u'(x)=\Delta>0$. Set
\[
\ep = \frac{\Delta}{4\kappa}.
\] We may assume that  $\ep\leq 2\bar\ep$ as defined above, as otherwise we can use a
larger upper bound on the Lipschitz constants for the functions in $\U$.

Consider the interval
\[
I=[(u'(x)+\kappa\ep)\one, (u(x)-\kappa\ep)\one], 
\] with volume
\[ 
(u(x)-\kappa\ep - (u'(x)+\kappa\ep))^d  =  (\Delta/2)^d.
\]

Consider $B_{\ep/2}(x)$. If $y\in B_{\ep/2}(x)$ then $\abs{\tilde u(y) - \tilde u(x)}< \kappa \ep$ for any $\tilde u\in \U$. 

Now, if $z\in I$ and $y\in B_\ep(x)$ then 
\[
u(y) > u(x) - \kappa \ep = u((x - \kappa \ep)\one)  \geq  u(z)
\] by monotonicity. Similarly,
\[
u'(z)\geq  u'((x+\kappa \ep)\one)= u'(x)+\kappa \ep > u'(y)
\]
Thus $(y,z)\in \succ\setminus \succeq'$ for any $(y,z)\in B_\ep(x)\times I$, and \begin{align*}
  \mu(\succeq,\succeq) - \mu(\succeq',\succeq) 
& = \int 1_{\succ\setminus \succ'}(y,z)[q(\succeq;(y,z))- q(\succeq;(z,y))]\diff
  \la(y,z) \\
& \geq \int_{B_{\ep/2}(x)\times I} 1_{\succ\setminus \succ'}(y,z)[q(\succeq;(y,z))- q(\succeq;(z,y))]\diff
  \la(y,z) \\
  & \geq \la(B_{\ep(x)/2}\times I) \inf\{q(\succeq;(y,z)- q(\succeq;(z,y)):(y,z)\in B_{\ep/2}(x)\times I \}.
  \end{align*}

Where the first identity is shown in \cite{chambers2021recovering}. The second inequality follows because
$q(\succeq;(x,y))>1/2>q(\succeq;(y,x))$ on  $(x,y)\in\succ$. The third inequality is
because $(y,z)\in \succ\setminus {\succeq}'\subseteq \succ\setminus {\succ'}$ on $B_\ep(x)\times I$.

By the assumptions we have placed on $\la$, and the calculations above, we know
that \[
\la(B_{\ep(x)/2})\geq \bar c \; \Vol(B_{\bar \epsilon}(x)\cap X) \geq \bar c
c'\; \Vol(B_{\bar \epsilon}(x)) = \bar c c' \frac{(\ep/2)^d \pi^{d/2}}{\Gamma(1+d/2)}.\]
So there is a constant $C''$ (that only
depends on $X$ and $\bar c$) so that $\la(I\times B_{\ep/2}(x))$ is bounded below
by  
\[
  (\Delta/2)^d \frac{C''(\ep/2)^d \pi^{d/2}}{\Gamma(1+d/2)}
= (\Delta/2)^d \frac{C''\Delta^d \pi^{d/2}}{(8\kappa)^d \Gamma(1+d/2)} = C' \Delta^{2d}.
\] Here $C'$ is a constant that only depends on $C''$, $\kappa$ and $d$.

By the assumption that $\Theta>1/2$,  we get that \[
  \mu(\succeq,\succeq) - \mu(\succeq',\succeq) \geq C \Delta^{2d}
\] for some constant $C$ that depends on  $C'$ and $\Theta$. 
\end{proof}

\begin{lemma}\label{lem:homothetic} Consider a homothetic noise choice environment
  $(X,\P,\la,q)$.   There is a constant $C$ with the following property. If $\succeq$ and $\succeq'$ are two preferences in $\P$ with
  representations $u$ and $u'$ (respectively) in $\U$. Then \[
  C \rho(u,u')^{2d} \leq   \mu(\succeq,\succeq) - \mu(\succeq',\succeq) 
  \]
  \end{lemma}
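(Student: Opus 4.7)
The plan is to follow the template of Lemma~\ref{lem:Lipsch}, substituting an estimate derived from homogeneity and monotonicity for the Lipschitz estimate used there. First I would reduce the bound to a volume computation. Let $A^s = \{(y,z) : y \succ z \text{ and } z \succeq' y\}$; combining the identity from \cite{chambers2021recovering} with the hypothesis $q(\succeq;y,z) \geq \Theta > 1/2$ on $\succ$ yields $\mu(\succeq,\succeq) - \mu(\succeq',\succeq) \geq (2\Theta-1)\la(A^s)$, and since $\la \geq c\,\mathrm{Leb}$ it suffices to exhibit a product subset of $A^s$ of $2d$-dimensional Lebesgue measure of order $\Delta^{2d}$, where $\Delta = \rho(u,u')$.

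Next I would locate a suitable center $x$. By the defining property of $\rho$ there is $x_0 \in X$ with (up to swapping $u$ and $u'$) $u(x_0) - u'(x_0) \geq \Delta$. Writing $x_0 = \theta_0 \xi_0$ with $\xi_0 \in S^M_\al$ and $\theta_0 \in (0,1]$ (the case $\theta_0 = 0$ being trivial), homogeneity gives $u(\xi_0) - u'(\xi_0) \geq \Delta$, so we may replace $x_0$ by $\xi_0$, a point with every coordinate $\geq \al$. To move off the outer boundary of $X$ I would then rescale by a small fixed $\beta \in (0,1)$ and work with $x = \beta \xi_0$: then $u(x) - u'(x) \geq \beta\Delta$, every coordinate satisfies $x_i \geq \beta\al$, and there is a threshold $\delta_0 > 0$ depending only on $M,\al,d,\beta$ such that the sup-norm ball $B_\delta(x) = \{y : \|y-x\|_\infty \leq \delta\}$ is contained in $X = D^M_\al$ for all $\delta \leq \delta_0$.

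The key estimate, replacing Lipschitz control, is the following: for any $y \in B_\delta(x)$ with $\delta \leq \beta\al$, the entrywise bounds $(1-\delta/(\beta\al))\,x \leq y \leq (1+\delta/(\beta\al))\,x$ hold because $|y_i - x_i|\leq \delta$ and $x_i \geq \beta\al$; monotonicity together with homogeneity then give $(1-\delta/(\beta\al))\,u(x) \leq u(y) \leq (1+\delta/(\beta\al))\,u(x)$, and similarly for $u'$. Taking $\delta$ proportional to $\Delta$ (specifically $\delta \asymp \beta\al\Delta/M_U$, where $M_U$ is an a priori upper bound on the utilities in $\U$ restricted to $X$; values of $\Delta$ large enough that this forces $\delta > \delta_0$ are absorbed into the final constant $C$), I would conclude $u(y) - u'(y) \geq \beta\Delta/2$ uniformly on $B_\delta(x)$, together with two-sided bounds on $u(y)$ and $u'(y)$ around $u(x)$ and $u'(x)$.

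Finally the construction mirrors the Lipschitz case. Set $\underline{t} = (1+\delta/(\beta\al))\,u'(x) + \eta$ and $\overline{t} = (1-\delta/(\beta\al))\,u(x) - \eta$ for a small $\eta > 0$ that enforces strict inequalities; the cube $I = [\underline{t}\one,\overline{t}\one]$ then has side length of order $\Delta$. For any $(y,z) \in B_\delta(x) \times I$, monotonicity and the Wald identity $u(t\one) = t = u'(t\one)$ give $u(z) \leq \overline{t} < u(y)$ and $u'(z) \geq \underline{t} > u'(y)$, so $(y,z) \in A^s$. Because $\mathrm{Vol}(B_\delta(x)) \asymp \delta^d \asymp \Delta^d$ and $\mathrm{Vol}(I) \asymp \Delta^d$, we obtain $\la(A^s) \geq C'\Delta^{2d}$, and the lemma follows with $C = (2\Theta-1)C'$. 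The main obstacle I anticipate is the geometric bookkeeping of the rescaling step, which must simultaneously guarantee $B_\delta(x)\subseteq X=D^M_\al$ and preserve a coordinate lower bound proportional to $\al$; once those constants are chosen, the homogeneity-plus-monotonicity inequality plays exactly the role Lipschitz played in Lemma~\ref{lem:Lipsch}, and the remaining volume counting is routine.
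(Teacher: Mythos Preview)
Your overall strategy---find a point where $u-u'$ is large, then build a product set on which $\succ\setminus\succeq'$ holds and estimate its volume---is sound and parallels the paper. The gap is in your geometric ``rescaling'' step. You claim that replacing $\xi_0\in S^M_\al$ by $x=\beta\xi_0$ with $\beta<1$ guarantees a uniform $\delta_0>0$ with $B_\delta(x)\subseteq D^M_\al$ for $\delta\le\delta_0$. This is false. The set $D^M_\al$ is a cone through the origin, and radial scaling by $\beta$ preserves its conical boundary: if $(\xi_0)_i=\al$ for some coordinate $i$ (which is allowed), then $(\xi_0)_i/\norm{\xi_0}=\al/M$, and therefore also $x_i/\norm{x}=\al/M$, so $x$ lies on the conical face $\{z:z_i=(\al/M)\norm{z}\}$. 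No ball around such $x$ is contained in $D^M_\al$, regardless of how small $\beta$ is. Your rescaling handles only the spherical outer boundary $\norm{\cdot}=M$, not the lateral faces of the cone, so the claimed $\delta_0$ does not exist and the subsequent volume bound $\Vol(B_\delta(x))\asymp\Delta^d$ is not available as stated.

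The paper sidesteps this by working directly at $x\in S^M_\al$ (no $\beta$-rescaling) and replacing the ball by the order-interval wedge $Z_\eta=[\eta x,x]\cap D^M_\al$, whose volume it computes explicitly as $(1-\eta)^d\Vol([0,x]\cap D^M_\al)$ via the affine map $z\mapsto x+(1-\eta)(z-x)$; the choice $\eta$ is tied to $\Delta$ so that $1-\eta\ge\Delta/(2M)$. If you want to keep your ball-based construction, the natural fix is to drop the containment claim and instead use $B_\delta(x)\cap X$: the convexity argument at the start of the proof of Lemma~\ref{lem:Lipsch} gives $\Vol(B_\delta(x)\cap X)\ge c'\Vol(B_\delta(x))$ for a constant $c'$ depending only on $X$, which is all you need. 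Either route closes the gap; as written, though, the proposal does not.
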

\begin{proof}
Let $x\in X$ be such that \[\rho(u,u')\leq u(x)-u'(x)=\Delta>0.\] Choose $\eta\in (0,1)$
so that $u(\eta x)-u'(x)=\Delta/2$. Let
\[I=(u'(x)\one,u(\eta x)\one)\] and
\[Z_\eta = [\eta x,x]\cap D^M_\al.\] 
Note that $I\subseteq X$ because by  homotheticity, $\norm{x}=M$ and hence $x\geq \al \one$. Then we must have $\al\one \leq u'(x)\one$ as $\al\one \not\leq u'(x)\one$ would mean that $u'(x)\one\ll \al\one$, contradicting monotonicity and $x\sim' u'(x)\one$. 

Observe that if $y\in I$ and $z\in Z_\eta$ then we have that
\[
u(y)< u( u(\eta x)\one ) = u(\eta x) \leq u(z),
\] as $y<u(\eta x)\one$ and $\eta x\leq z$; while
\[
u'(z)\leq u'(x) = u'( u'(x)\one) < u'(y).
\]
Hence $(z,y)\in{\succ} \setminus {\succeq'}$.

First we estimate $\Vol(Z_\eta)$. Write $Z_0$ for $[0,x]\cap D^M_{\al}$. Define the function $f(z) = x + (1-\eta)(z-x)$ and
note that when $z\in Z_0$ then $f(z)=\eta x+(1-\eta) z\in [\eta x,x]$ because $z\geq 0$. Note also that
$f(z)$ is a convex combination of $x$ and $z$, so $f(z)\in D^M_{\al}$ as the latter
is a convex set. This shows that \[Z_\eta = \{x\} + (1-\eta)(Z_0 - \{x\}),\] and
hence that $\Vol(Z_\eta) = (1-\eta)^d \Vol(Z_0)$.

Now, since $Z_0$ is star shaped we
have \[\Vol(Z_0)=\frac{1}{d} \int_{y\in  S^M_\al}\rho(y,[0,x])^d \diff y \geq (\frac{\al}{M})^d
A^M_\al,\] where $A^M_\al$ is the surface area of $S^M_\al$ and $\rho(y,[0,x])=
\max\{\ta>0:\ta y\in [0,x]$ is the radial function of the set $[0,x]$ (see \cite{schneider2014convex} page 57). The inequality results from $\rho(y,[0,x])\geq \al/M$ as $x_i\geq \al$ and $y_i\leq M$ for any $y\in S^M_\al$.

Now,
\[
1-\eta = 1- \frac{\Delta/2 + u'(x)}{u(x)}  = \frac{\Delta/2}{u(x)}\geq \frac{\Delta/2}{M},
\] as $u(x)\leq M$. Thus we have that 
\[
\Vol(Z_\eta)
\geq \Delta^d C',
\] with $C' = \Vol(Z_0)/(2M)^d>0$, a constant.

Moreover, we have $\Vol(I)= (\Delta/2)^d$ as $I\subseteq X$. Then we obtain, again
using a formula derived in  \cite{chambers2021recovering}, and that $q(\succeq;(x,y))>1/2>q(\succeq;(y,x))$ on  $(x,y)\in\succ$:
 \begin{align*}
  \mu(\succeq,\succeq) - \mu(\succeq',\succeq) 
& = \int 1_{\succ\setminus \succ'}(z,y)[q(\succeq;(z,y))- q(\succeq;(y,z))]\diff
  \la(z,y) \\
& \geq \int_{Z_\eta \times I} 1_{\succ\setminus \succ'}(z,y)[q(\succeq;(z,y))- q(\succeq;(y,z))]\diff
  \la(z,y) \\
  & \geq \la(Z_\la \times I) \inf\{q(\succeq;(z,y)- q(\succeq;(y,z)):(z,y)\in Z_\eta \times I \} \\
  & \geq (\Delta/2)^d C' \Delta^d \Theta,
 \end{align*}
 where $\Theta = \inf\{q(\succeq;(z,y)- q(\succeq;(y,z)):(z,y)\in Z_\eta \times I \}>0$.
\end{proof}

\subsection{Proof of Theorem~\ref{thm:noisychoice}}
For the rest of this proof, we denote $\mu(\succeq,\succeq^*)$ by $\mu(\succeq)$.

The rest of the proof uses routine ideas from statistical learning theory. By
standard results (see, for example, Theorem 3.1 in \cite{boucheron2005theory}), there exists an
event $E$ with probability at least $1-\da$ on which:
\[
\sup \{\abs{\mu_n(\succeq) - \mu(\succeq)} : \succeq\in P\}
\leq \E \sup \{\abs{\mu_n(\succeq) - \mu(\succeq)} : \succeq\in P\}
+ \sqrt{\frac{2\ln (1/\da)}{n}}.
\]
Moreover, again by standard arguments (see Theorem 3.2 in \cite{boucheron2005theory}), we also have
\[
\E \sup \{\abs{\mu_n(\succeq) - \mu(\succeq)} : \succeq\in P\}\leq 2 \E
\sup\{\frac{1}{n}\abs{\sum_i\sa_i \one_{\tilde x_i \succeq y_i}} : \succeq\in \P \},
\]
where \[
R_n(\P) = \E
\sup\{\frac{1}{n}\abs{\sum_i\sa_i \one_{\tilde x_i \succeq y_i}} : \succeq\in \P \}
\] is the Rademacher average of $\P$.

Now, by the Vapnik-Chervonenkis inequality (see Theorem 3.4 in
\cite{boucheron2005theory}), we have that 
\[
\E \sup \{\abs{\mu_n(\succeq) - \mu(\succeq)} : \succeq\in P\}\leq K\sqrt{\frac{V}{n}},
\] where $V$ is the VC dimension of $\P$, and $K$ is a universal constant.

So on the event $E$, we have we have that 
\[
\sup \{\abs{\mu_n(\succeq) - \mu(\succeq)} : \succeq\in P\}
\leq  K\sqrt{V/n} + \sqrt{\frac{2\ln (1/\da)}{n}}.
\]

We now combine these statements with Lemmas~\ref{lem:Lipsch} and~\ref{lem:homothetic}. In particular, we let $D=d$ or $D=2d$ depending on which of the lemmas we invoke. Let $u^*\in \U$ represent $\succeq^*$ and $u_n\in\U$ represent $\succeq_n$. Let
$\Delta = \rho(u^*,u_n)$, a magnitude that depends on the sample. Then, on the event
$E$, by Lemma~\ref{lem:Lipsch} or~\ref{lem:homothetic}, we have that
\begin{align*}
  C\Delta^D & \leq  \mu(\succeq^*) - \mu(\succeq_n) \\
  & = \mu(\succeq^*) - \mu_n(\succeq^*) + \mu_n(\succeq^*) - \mu_n(\succeq_n) + 
   \mu_n(\succeq_n)  - \mu(\succeq_n) \\
  & \leq  2K\sqrt{\frac{V}{n}} + 2\sqrt{\frac{2\ln (1/\da)}{n}},
  \end{align*} where we have used that $\mu_n(\succeq^*) - \mu_n(\succeq_n) <0$ by
definition of $\succeq_n$. This proves the second statement in the theorem.

To prove the first statement in the theorem, by Lemmas~\ref{lem:Lipsch} and~\ref{lem:homothetic} again, and using that $\mu_n(\succeq_n)\geq \mu_n(\succeq^*)$, we have that, for any $\eta>0$,
\begin{align*}
\Pr (\rho(u^*,u_n)>\eta) & \leq \Pr ( \mu(\succeq^*) - \mu(\succeq_n) >
C\eta^{D}) \\
& \leq \Pr ( \mu(\succeq^*) - \mu_n(\succeq^*) > C\eta^D/2 )
+ \Pr ( \mu_n(\succeq_n) - \mu(\succeq_n) > C\eta^D/2 ) \\
& \leq 2 \Pr ( \sup\{ \abs{\mu(\succeq') - \mu_n(\succeq')}:\succeq'\in\P\} > C\eta^D/2) \to 0
\end{align*} as $n\to\infty$ by the uniform convergence in probability result shown
in \cite{chambers2021recovering}.

\subsection{Proof of Theorem~\ref{thm:weaklymon-v1}}

By standard results (see \cite{HILDENBRAND1970161}), since $X$ is locally compact
  Polish, the topology of closed convergence is compact metric.

We will show that for any subsequence of $\succeq^k$, there is a subsubsequence converging to $\succeq^*$, which will establish that $\succeq^k \rightarrow \succeq^*$.

So choose a convergent subsubsequence of the given subsequence.  To simplify notation and with a slight abuse of notation, let us also refer to this subsubsequence as $\succeq^k$.  Call its limit $\succeq$; $\succeq$ is complete as the set of complete relations is closed in the closed convergence topology.  It is therefore sufficient to establish that $\succ^* \subseteq \succ$ and $\succeq^*\subseteq \succeq$.

First we show that $x\succ^* y$ implies that $x\succ y$. So let $x\succ^* y$.  Let $U$ and $V$ be neighborhoods of $x$ and $y$, respectively, such that $x'\succ^* y'$
for all $x'\in U$ and $y'\in V$. Such neighborhoods exist by the continuity of $\succeq^*$. We prove first that if $(x',y')\in U\times V$, then there exists $N$
such that $x'\succ_n y'$ for all $n\geq N$. Recall that $B=\cup\{B':B'\in\Sigma_{\infty} \}$. By hypothesis, there exist $x''\in U\cap B$ and $y''\in V\cap B$ such that $x''\leq x'$ and $y'\leq y''$. Each $\succeq_n$ is a strong rationalization of the finite experiment of order $n$, so if $\{\tilde x,\tilde y\}\in\Sigma_n$ then $\tilde x\succ_n \tilde y$ implies that $\tilde x\succ_m \tilde y$ for all $m\geq n$. Since $x'',y''\in B$, there is  $N$ is such that $\{x'',y''\}\in\Sigma_N$. Thus $x''\succ^* y''$ implies that  $x''\succ_n y''$ for all $n\geq N$. So, for $n\geq N$,  $x'\succ_n y'$, as $\succeq_n$ is weakly monotone.

Now we establish that $x\succ y$. Let $\{(x_n,y_n)\}$ be an arbitrary sequence with $(x_n,y_n)\rightarrow (x,y)$. By hypothesis, there is an increasing sequence $\{x'_n\}$, and a decreasing sequence  $\{y'_n\}$, such that $x'_n\leq x_n$ and $y_n\leq y'_n$ while $(x,y)= \lim_{n\rightarrow\infty }(x'_n,y'_n)$.

Let $N$ be large enough that $x'_N\in U$ and $y'_N\in V$. Let $N'\geq N$ be such that $x'_N\succ_{n} y'_N$ for all $n\geq N'$ (we established the existence of such $N'$ above). Then, for any $n\geq N'$ we have that \[x_n\geq x'_n \geq x'_N \succ_n y'_N \geq y'_n\geq y_n.\] By the weak monotonicity of $\succeq_n$, then, $x_n\succ_n y_n$. The sequence $\{(x_n,y_n)\}$ was arbitrary, so $(y,x)\notin \succeq= \lim_{n\rightarrow\infty}\succeq_n$. Thus $\neg (y\succeq x)$. Completeness of $\succeq$ implies that  $x\succ y$.

In second place we show that if $x\succeq^* y$ then $x\succeq y$, thus completing the proof.  So let $x\succeq^* y$.  We recursively construct  sequences $x^{n_k},y^{n_k}$ such that $x^{n_k}\succeq^{n_k}y^{n_k}$ and $x^{n_k}\rightarrow x$, $y^{n_k}\rightarrow y$.

So, for any $k\geq 1$, choose $x'\in N_x(1/k)\cap B$ with $x'\geq x$, and $y'\in N_y(1/k)\cap B$ with $y'\leq y$; so that $x'\succeq^* x\succeq^* y\succeq^* y'$, as $\succeq^*$ is weakly monotone. Recall that
$\succeq_n$ strongly rationalizes $c_{\succeq^*}$ for $\Sigma_n$. So $x'\succeq^* y'$ and $x',y'\in B$ imply that  $x'\succeq_n y'$ for all $n$ large enough. Let  $n_k > n_{k-1}$ (where we can take $n_0=0$) such that  $x'\succeq_{n_k} y'$; and let $x^{n_k}=x'$ and $y^{n_k}=y'$.

Then we have $(x^{n_k},y^{n_k})\rightarrow (x,y)$ and $x_{n_k} \succeq_{n_k} y_{n_k} $. Thus $x\succeq y$.

\subsection{Proof of Theorem~\ref{thm:connected-v1}}

First, it is straightforward to show that $x \succ y$ implies $x \succeq' y$.  Because otherwise there are $x,y$ for which $x\succ y$ and $y\succ' x$.  Take an open neighborhood $U$ about $(x,y)$ and a pair $(z,w)\in U\cap (B\times B)$ for which $z\succ w$ and $w\succ' z$, a contradiction.  Symmetrically, we also have $x \succ' y$ implies $x\succeq y$.

Now, without loss, suppose that there is a pair $x,y$ for which $x\succ y$ and $x \sim' y$.  By connectedness and continuity, $V=\{z: x \succ z \succ y\}$ is nonempty. Indeed if we assume, towards a contradiction that $V=\os$, then  $\{z: x\succ z\}$ and $\{z:z \succ y\}$ are nonempty open sets.  Further, for any $z\in X$, either $x\succ z$ or $z \succ y$ (because if $\neg (x\succ z)$ then by completeness $z\succeq x$, which implies that $z\succ y$).  Conclude that $\{z:x \succ z\}\cup \{z :z \succ y\}=X$ and each of the sets are nonempty and open (by continuity of the preference $\succeq$); these sets are disjoint, violating connectedness of $X$. So we conclude that $V$ is nonempty. By continuity of the preference $\succeq$, $V$ os open.  

We claim that there is a pair $(w,z)\in (V\times V)\cap (B\times B)$ for which $w \succ z$.  For otherwise, for all $(w,z)\in V\times V \cap (B\times B)$, $w \sim z$.  Conclude then by continuity that for all $(w,z)\in V\times V$, $w \sim z$.  Observe that this implies that, for any $w\in V$, the set $\{z: w\succ z \succ y\}=\varnothing$, as if $w \succ z \succ y$, we also have that $x \succeq w \succ z$, from which we conclude $x \succ z$, so that $z \in V$ and hence $z \sim w$, a contradiction.  Observe that $\{z: w\succ z \succ y\}=\varnothing$ contradicts the continuity of $\succeq$ and the connectedness of $X$ (same argument as nonemptyness of $V$; see our discussion above).

We have shown that there is $(w,z)\in (V\times V)\cap (B\times B)$ for which $w\succ z$, so that $x \succ w \succ z \succ y$.  Further, we have hypothesized that $x \sim' y$.  By the first paragraph, we know that $x \succeq' w \succeq' z \succeq' y$.  If, by means of contradiction, we have $w\succ 'z$, then $x \succ' y$, a contradiction.  So $w \sim' z$ and $w\succ z$, a contradiction to $\succeq_{B\times B}=\succeq'_{B\times B}$. \

\newpage

\bibliographystyle{econometrica}
\bibliography{identification}

\end{document}